
\documentclass[sigconf]{acmart}
\pdfoutput=1

\settopmatter{printacmref=true}

\fancyhead{}

\usepackage{balance}

\def\BibTeX{{\rm B\kern-.05em{\sc i\kern-.025em b}\kern-.08emT\kern-.1667em\lower.7ex\hbox{E}\kern-.125emX}}
    
%

\copyrightyear{2019}
\acmYear{2019}
\setcopyright{rightsretained}
\acmConference[SPAA '19]{31st ACM Symposium on Parallelism in Algorithms and
Architectures}{June 22--24, 2019}{Phoenix, AZ, USA}
\acmBooktitle{31st ACM Symposium on Parallelism in Algorithms and
Architectures (SPAA '19), June 22--24, 2019, Phoenix, AZ, USA}
\acmDOI{10.1145/3323165.3323179}
\acmISBN{978-1-4503-6184-2/19/06}

\acmSubmissionID{spaaf023}



\usepackage{standalone}
\usepackage{color}
\usepackage{bbm}
\usepackage[ruled,lined,linesnumbered,commentsnumbered]{algorithm2e}
\usepackage[english]{babel}
\usepackage{tikz}
\usetikzlibrary{calc, arrows.meta, decorations, decorations.pathmorphing, calc, patterns}

\allowdisplaybreaks

\usepackage{graphicx}
\usepackage{pgfplots}

\usepackage{thmtools}
\usepackage{thm-restate}

\usepackage{enumerate}
\usepackage{float}

\theoremstyle{plain}
\newtheorem{theorem}{Theorem}[section]

\newtheorem{lemma}[theorem]{Lemma}
\newtheorem{proposition}[theorem]{Proposition}

\theoremstyle{definition}

\theoremstyle{remark}

\newcommand{\mytitle}{Near Optimal Coflow Scheduling in Networks}
\usepackage{url}

\newif\ifjournal
\journaltrue

\def\showauthornotes{1}

\ifnum\showauthornotes=1

\else

\fi

\ifnum\showauthornotes=1
\newcommand{\Authornote}[2]{{\sf\small\color{red}{[#1: #2]}}}
\newcommand{\Authoredit}[2]{{\sf\small\color{red}{[#1]}\color{blue}{#2}}}
\newcommand{\Authorcomment}[2]{{\sf \small\color{gray}{[#1: #2]}}}
\newcommand{\Authorfnote}[2]{\footnote{\color{red}{#1: #2}}}
\newcommand{\Authorfixme}[1]{\Authornote{#1}{\textbf{??}}}
\newcommand{\Authormarginmark}[1]{\marginpar{\textcolor{red}{\fbox{
#1:!}}}}
\else
\newcommand{\Authornote}[2]{}
\newcommand{\Authoredit}[2]{}
\newcommand{\Authorcomment}[2]{}
\newcommand{\Authorfnote}[2]{}
\newcommand{\Authorfixme}[1]{}
\newcommand{\Authormarginmark}[1]{}
\fi

\title{\mytitle}
\author{Mosharaf Chowdhury}
\email{mosharaf@umich.edu}
\affiliation{University of Michigan, Ann Arbor}
\authornote{Research supported by the National Science Foundation (CNS-1563095 and CNS-1617773)}
\author{Samir Khuller}
\email{samir.khuller@northwestern.edu}
\affiliation{Northwestern University}
\authornote{Research partially supported by an Amazon research award.}
\author{Manish Purohit}
\email{mpurohit@google.com}
\affiliation{Google, Mountain View}
\author{Sheng Yang}
\email{styang@cs.umd.edu}
\affiliation{University of Maryland, College Park}
\authornotemark[2]
\author{Jie You}
\email{jieyou@umich.edu}
\affiliation{University of Michigan, Ann Arbor}
\authornotemark[1]

%

\pagestyle{plain}
\begin{document}
\begin{abstract}
  The coflow scheduling problem has emerged as a popular abstraction in the last few years to study data communication problems within a data center~\cite{chowdhury2012coflow}. In this basic framework, each coflow has a set of communication demands and the goal is to schedule many coflows in a manner that minimizes the total weighted completion time. A coflow is said to complete when all its communication needs are met. This problem has been extremely well studied for the case of complete bipartite graphs that model a data center with full bisection bandwidth and several approximation algorithms and effective heuristics have been proposed recently~\cite{agarwal2018sincronia,ahmadi2017scheduling,chowdhury2018terra}.

In this work, we study a slightly different model of coflow scheduling in general graphs (to capture traffic between data centers ~\cite{jahanjou2017asymptotically,chowdhury2018terra}) and develop practical and efficient approximation algorithms for it. Our main result is a randomized 2 approximation algorithm for the single path and free path model, significantly improving prior work. In addition, we demonstrate via extensive experiments that the algorithm is practical, easy to implement and performs well in practice.
\end{abstract}

%
%
 \begin{CCSXML}
<ccs2012>
<concept>
<concept_id>10002950.10003624.10003633.10010918</concept_id>
<concept_desc>Mathematics of computing~Approximation algorithms</concept_desc>
<concept_significance>500</concept_significance>
</concept>
<concept>
<concept_id>10002950.10003624.10003633.10010917</concept_id>
<concept_desc>Mathematics of computing~Graph algorithms</concept_desc>
<concept_significance>300</concept_significance>
</concept>
<concept>
<concept_id>10003033.10003068.10003069.10003072</concept_id>
<concept_desc>Networks~Packet scheduling</concept_desc>
<concept_significance>500</concept_significance>
</concept>
<concept>
<concept_id>10003033.10003068.10003073.10003075</concept_id>
<concept_desc>Networks~Network control algorithms</concept_desc>
<concept_significance>500</concept_significance>
</concept>
<concept>
<concept_id>10003752.10003809.10003636.10003808</concept_id>
<concept_desc>Theory of computation~Scheduling algorithms</concept_desc>
<concept_significance>500</concept_significance>
</concept>
<concept>
<concept_id>10003752.10003809.10003636.10003811</concept_id>
<concept_desc>Theory of computation~Routing and network design problems</concept_desc>
<concept_significance>500</concept_significance>
</concept>
<concept>
<concept_id>10003752.10003809.10003716.10011138.10010041</concept_id>
<concept_desc>Theory of computation~Linear programming</concept_desc>
<concept_significance>500</concept_significance>
</concept>
<concept>
<concept_id>10003752.10003809.10003635.10003644</concept_id>
<concept_desc>Theory of computation~Network flows</concept_desc>
<concept_significance>300</concept_significance>
</concept>
<concept>
<concept_id>10003752.10003809.10003636.10003813</concept_id>
<concept_desc>Theory of computation~Rounding techniques</concept_desc>
<concept_significance>300</concept_significance>
</concept>
<concept>
<concept_id>10010520.10010521.10010537.10003100</concept_id>
<concept_desc>Computer systems organization~Cloud computing</concept_desc>
<concept_significance>500</concept_significance>
</concept>
</ccs2012>
\end{CCSXML}

\ccsdesc[500]{Mathematics of computing~Approximation algorithms}
\ccsdesc[300]{Mathematics of computing~Graph algorithms}
\ccsdesc[500]{Networks~Packet scheduling}
\ccsdesc[500]{Networks~Network control algorithms}
\ccsdesc[500]{Theory of computation~Scheduling algorithms}
\ccsdesc[500]{Theory of computation~Routing and network design problems}
\ccsdesc[500]{Theory of computation~Linear programming}
\ccsdesc[300]{Theory of computation~Network flows}
\ccsdesc[300]{Theory of computation~Rounding techniques}
\ccsdesc[500]{Computer systems organization~Cloud computing}

%
\keywords{coflow, scheduling, LP relaxation, network flow, LP rounding, cloud computing}


%
\maketitle

\section{Introduction}
\label{sec:intro}

Modern computing applications have rather intensive computational needs. Many machine learning applications require up to tens of thousands of machines and often involve processing units across multiple data centers collaborating on the same application. This collaboration is usually handled by a large-scale distributed computing framework that ideally ensures a close-to-linear speedup compared to a single machine. A crucial part of the collaboration is that large chunks of data require both inter and intra-datacenter transmissions.

For intra-datacenter transmission, a common example would be the MapReduce framework. Map workers write all intermediate results independently to several servers to guard against failure and allow possible re-calculation. These results are shuffled and sent to Reduce workers. The volume of transmission between machines is so large that it has become a major bottleneck in the performance. In addition to this challenge, multiple applications may share the same cluster, and an un-coordinated schedule of their data transmission may cause an unacceptable delay in their completion times.

Chowdhury and Stoica~\cite{chowdhury2012coflow} first introduced the abstraction of {\em coflow scheduling}, which assumes that each application consists of a set of flows, and is finished once all the flows are completed. In their framework the network between machines is modeled as a switch: the input ports of different machines on one side, and output ports on the other side. A machine can send (receive) data to (from) any other machine, but to (from) only one machine at a time (sending and receiving may happen concurrently). 
The transmission speed between all machines is uniform. This describes a ``perfect'' datacenter where networking between machines is handled by a high-speed central switch (modeled by a complete bipartite graph) connected directly to all the machines \cite{chowdhury2012coflow}. However, real world datacenters are far more complicated; direct (virtual) links between machines may exist to avoid latency, duplicate links may exist to tolerate failure, network speeds may vary widely for different machines and links, and complicated network structures may exist for a variety of reasons. To make things worse, some tasks may involve multiple datacenters around the globe, and the switch model simply cannot accurately capture the graph based  network that connects all the data centers.

For inter-datacenter transmission, distributed machine learning tasks can generate huge amounts of traffic. Due to legal or cost reasons, some datasets cannot be gathered into a single datacenter for processing. Instead, several geographically distributed datacenters work together to train a single model, and exchange local updates frequently to ensure accuracy and convergence. Though the size of a single transmission may be small considering the network bandwidth, the repeated exchange blows up the volume of transmission and makes network traffic its bottleneck.

In order to solve these problems, a slightly different model of coflow scheduling was proposed by Jahanjou et al.~\cite{jahanjou2017asymptotically}, which assumes that the underlying connection between machines is an arbitrary graph rather than a complete bipartite graph. Each node can be a machine, a datacenter or an exchange point (switch, router, etc.), and an edge between two nodes represents a physical link between the two Internet infrastructures. When some data needs to be transmitted from one node to another, it needs to be transmitted along edges. Unlike in the switch model where only one packet can be sent at each time slot, data for multiple jobs is allowed to transfer on the same link at the same time, or in other words, shared traffic on links is allowed. The total volume of data transmission on a link however is bounded by the link bandwidth\footnote{One major challenge in the switch model is the node-wise I/O speed constraint. In order to capture this in the graph model, we can replace every datacenter with a gadget of two nodes. The first node has exactly the same neighbors and edges that the original node for the datacenter has, plus links from and to the second node. The second node is only connected to the first node, and is the true source and destination for all demands involving this datacenter. By setting capacity on the links between these two nodes, we can enforce I/O limit for the whole datacenter like in the switch model.}. Jahanjou et al.~\cite{jahanjou2017asymptotically} considered the model in which data has to travel along a single specified path. In addition to this model, we also consider the \emph{free path} model which allows data to be split or merged at nodes to utilize the whole graph when transmitting the same piece of data as long as the capacity of each link is respected. This seems much more complicated in practice than a single path transmission, but modern distributed computation frameworks~\cite{chowdhury2018terra} allow this kind of fine-grained control on network routing and transfer rate, which makes the model realistic. See Figure~\ref{fig:example_of_coflow} for a brief illustration of the two models. The formal definitions come in Section~\ref{sec:problem_formulation}.

\begin{figure}[htbp]
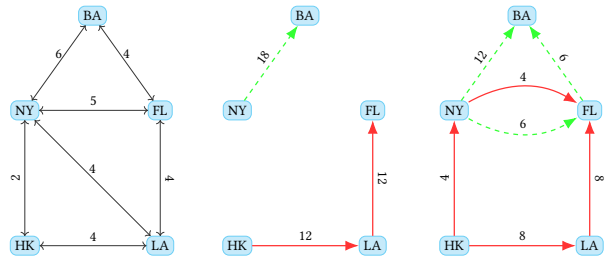

  \centering
  \scalebox{0.6}{\includestandalone{img/coflow_example/coflow_example}}%
  \hspace{0.5cm}
  \scalebox{0.6}{\includestandalone{img/coflow_example/coflow_example_path}}%
  \hspace{0.5cm}
  \scalebox{0.6}{\includestandalone{img/coflow_example/coflow_example_flow}}%
  \caption{Example of coflow. The first graph shows the network topologies and the bandwidth of each link. We have one coflow consisting of two flows: one from \textsf{NY} to \textsf{BA} of demand 18 (denoted with dashed, green lines), the other from \textsf{HK} to \textsf{FL} of demand 12 (denoted with solid, red lines). The second graph shows the single path model, where each flow needs to be transmitted along a given path. It also implies a schedule in this model: transmit according to the path for 3 time units, and both flows are done. The third graph shows the free path model, where each flow can be split along multiple paths as long as the capacity of edges are respected. Here both flows can share the link from \textsf{NY} to \textsf{FL} and the entire coflow finishes in 2 units of time.}
  \label{fig:example_of_coflow}
\end{figure}

\subsection{Related Work}
\label{sec:related}
The idea of scheduling coflows was first introduced by Chowdhury and Stoica~\cite{chowdhury2012coflow}. Since then, it has been a hot topic in both the systems~\cite{luo2016towards,li2016efficient,chowdhury2014efficient,yu2016non} and the theory~\cite{qiu2015minimizing,Khuller:SPAA2016,jahanjou2017asymptotically,ahmadi2017scheduling,shafiee2018improved} communities. Most theoretical research has focused on coflow scheduling in the switch model, where the communication graph is a complete, bipartite graph. Since this basic problem generalizes  concurrent open shop scheduling and is thus NP-hard, the main results focus on the development of approximation algorithms. 
Over the last three years, a series of papers~\cite{qiu2015minimizing,Khuller:SPAA2016,ahmadi2017scheduling} have brought down the approximation factor from 67/3 to 5 for coflow scheduling with arbitrary release times and to 4 for the case without release times~\cite{ahmadi2017scheduling,shafiee2018improved} \footnote{4 is still the best known bound.}. We would like to note that a very simple primal-dual framework is proposed by Ahmadi et al.~\cite{ahmadi2017scheduling}, and this yields a very practical combinatorial algorithm for the problem without requiring the need to solve an LP (as in \cite{shafiee2018improved}). Furthermore, in recent work, a system called Sincronia~\cite{agarwal2018sincronia} was also developed based on the primal-dual method. It improves upon state-of-the-art methods and gives practical and near-optimal solutions in real testbeds.

One natural extension is to take the graph structure into consideration. Zhao et al.~\cite{zhao2015rapier} consider coflow scheduling over arbitrary graphs and attempt to jointly optimize routing and scheduling. They give a heuristic based on shortest job first, and use the idle slots to schedule flows from the longest job. Jahanjou et al.~\cite{jahanjou2017asymptotically} studied two variants of coflow scheduling over general graphs, namely, when the path for a flow is given or if the path is unspecified. In both cases, the transmission rate may change over time, but each flow can only take a single path, whether given to or chosen by the fractional routing algorithm. In the first case, Jahanjou et al.~\cite{jahanjou2017asymptotically} develop the first constant approximation algorithm (approximation ratio $17.6$) and in the second case they develop an $O(\frac{\log n}{\log\log n})$ approximation algorithm ($n$ is the number of nodes in the graph), matching the lower bound given by Chuzhoy et al.~\cite{chuzhoy2007hardness}.

Since preemption often incurs large overheads, some recent work~\cite{yu2016non} has tackled the problem of non-preemptive coflow scheduling.
Mao, Aggarwal, and Chiang~\cite{mao2018stochastic} consider the non-preemptive coflow scheduling problem with stochastic sizes and give an algorithm with an  approximation factor of $(2\log m + 1) ( 1 + \sqrt{m}\Delta)(1 + m\Delta) (3 + \Delta)/2$, where $\Delta$ is an upper bound of squared coefficient of variation of processing times. This simplifies to a $(3\log m + \frac{3}{2})$ approximation for non-stochastic cases.

\subsection{Our Contributions}
\label{sec:contributions}
The main result of this paper is a unified, tight $2$-approximation algorithm for the coflow scheduling problem in both the single path model and the free path model when all release times and demands are polynomially sized, and a $(2 + \epsilon)$-approximation when the release times and demands can be super-polynomial. This improves upon the 17.6 approximation given by Jahanjou et al.~\cite{jahanjou2017asymptotically} for the single path model, and is the first approximation algorithm for the free path model (introduced by You and Chowdhury~\cite{chowdhury2018terra}).

We also evaluated our algorithm using two WAN topologies (Microsoft's SWAN~\cite{hong2013achieving} and Google's G-Scale~\cite{jain2013b4}) on four different workloads (BigBench~\cite{big-bench}, TPC-DS~\cite{nambiar2006making}, TPC-H~\cite{poess2000new}, and Facebook (FB)~\cite{SWIM,coflow-benchmark}) and compared with state-of-the-art for both models\cite{jahanjou2017asymptotically,chowdhury2018terra}. For the single path model, we significantly improved over Jahanjou et al.~\cite{jahanjou2017asymptotically}. For the free path model, we are close to what Terra~\cite{chowdhury2018terra} gets, but have the extra capability of dealing with weights. Across all variants and models, we have shown that taking the LP solution directly is an effective heuristic in practice.

\subsection{Paper Organization}
\label{sec:org}
In Section~\ref{sec:problem_formulation} we give a formal definition of the two models for coflow scheduling. In Section~\ref{sec:lp_formulation} we give a general linear program that deals with both models. We give the additional flow constraints for the two models in Section~\ref{sec:flow constraints}. In Section~\ref{sec:algorithm} we describe the main algorithm and present the analysis in Section~\ref{sec:analysis-random}. We prove both models to be NP-hard in Section~\ref{sec:hardness}. In Section~\ref{sec:experiments}, we show experimental results by comparing our algorithms to some baseline algorithms. We conclude in Section~\ref{sec:conclusion} with some new directions to work on.

\section{Model and Problem Definition}
\label{sec:problem_formulation}
We now formally define the models of coflow scheduling that we consider in this paper. Let $G = (V,E)$ be a directed graph that represents the data center network and $c : E \rightarrow \mathbb{R}^{+}$ be a function that denotes the capacity (bandwidth) available on each edge of the network. Let $\mathcal{J} = \{F_1, F_2, \ldots, F_n\}$ denote the set of $n$ coflows. A coflow $F_j$ has weight $w_j$ that denotes its priority and consists of $n_j$ individual flows, i.e., $F_j = \{f_j^1, \ldots, f_j^{n_j}\}$ where $f_j^i = (s_j^i, t_j^i, \sigma_j^i)$ denotes a flow from source node $s_j^i \in V$ to sink $t_j^i \in V$ with demand $\sigma_j^i \in \mathbb{R}^+$. We assume that time is discrete and data transfer is instantaneous, i.e., it takes negligible time for data to cover multiple hops of edges as network delay is low compared to the time to transmit large chunks of data. A coflow $F_j$ is said to be completed at the earliest time $t$ such that for each flow $f_j^i \in F_j$, $\sigma_j^i$ units of data have been transferred from source $s_j^i$ to sink $t_j^i$. Our goal is to find a schedule that routes all the requisite flows (i.e. at any time, what fraction of a certain flow is transmitted and along which path/paths) subject to the edge bandwidth constraints so that the total weighted completion time of the coflows $\sum_j w_j C_j$ is minimized. Figure \ref{fig:coflow_setup} gives an example of an instance of the coflow scheduling problem over a simple network.

We consider {\em two} different transmission models, based on whether a flow $f_j^i$ has restrictions as to how the data is transmitted. In the {\em single path model}, each flow $f_j^i$ specifies a path $p_j^i$ from source $s_j^i \in V$ to sink $t_j^i \in V$ so that the flow can only be routed along that path. This is exactly the ``circuit-based coflows with paths given'' model studied by Jahanjou \textit{et al}.\ \cite{jahanjou2017asymptotically}.

In the {\em free path model}, we can freely select the routing we desire for any flow $f_j^i$. In any time slot, data transmission occurs as a feasible multi-commodity flow so that both flow-conservation and edge bandwidth constraints are satisfied. Thus, we can split any flow $f_j^i$ along multiple paths from its source to destination. 
This model was proposed in Terra~\cite{chowdhury2018terra}. Since the shortest paths of different flows can share edges and cause congestion, the free path model offers the flexibility of rerouting flows along less congested paths.
In addition, modern internet infrastructures support using multiple paths together to get a higher overall speed (known as link aggregation), which is captured in the free path model as network flow.

In fact, both models are handled {\em uniformly} by the same framework, and the only difference is the set of flow constraints that describe what are considered feasible transmissions. It is also possible to handle other kinds of transmissions, like an intermediate case between single path and free path: several paths are given, and we can use them together and decide at what rate we are transmitting along each path. Figures \ref{fig:singlepath} and \ref{fig:freepath} show the optimal solutions for the example coflow problem in the single path and free path models respectively.

\begin{figure}[htbp]
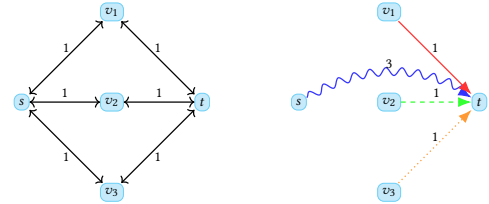

  \centering
  \scalebox{0.6}{\includestandalone{img/graph}}%
  \hspace{1cm}
  \scalebox{0.6}{\includestandalone{img/coflows}}%
  \caption{On the left is the graph structure: bi-directed edge of independent capacity of $1$, on the right is the demanded coflow. There are four coflows each containing one single flow: red (solid) from $v_1$ to $t$, green (dashed) from $v_2$ to $t$, orange (dotted) from $v_3$ to $t$, and blue (curly) from $s$ to $t$. The first three have demand $1$, while the blue coflow has a demand of $3$. All of them have the same weight of $1$.}
  \label{fig:coflow_setup}
\end{figure}
\begin{figure}[htbp]
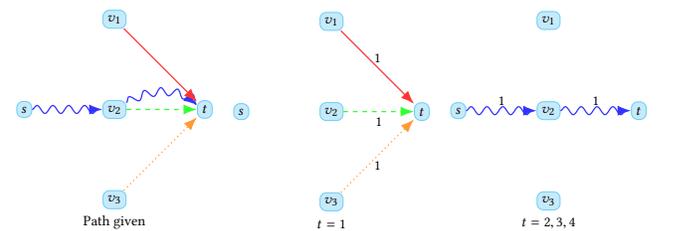

  \centering
  \scalebox{0.6}{\includestandalone{img/sol_1_1}}%
  \hspace{0.2cm}
  \scalebox{0.6}{\includestandalone{img/sol_1_2}}%
  \hspace{0.2cm}
  \scalebox{0.6}{\includestandalone{img/sol_1_3}}%
  \caption{For the \emph{single path} model, we have the path assignment in the left figure. Notice the path for green (dashed) flow shares an edge with that for the blue (curly) flow. Here is one optimal solution for the \emph{single path model}. The total weighted completion time is $1 + 1 + 1 + 4 = 7$.}
  \label{fig:singlepath}
\end{figure}
\begin{figure}[htbp]
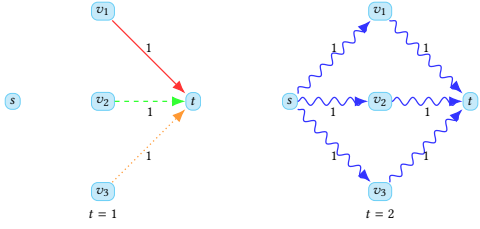

  \centering
  \scalebox{0.6}{\includestandalone{img/sol_opt_1}}%
  \hspace{1cm}
  \scalebox{0.6}{\includestandalone{img/sol_opt_2}}%
  \caption{This is the optimal solution in the \emph{free path model}. At time $1$, send the red (solid), green (dashed), and orange (dotted) coflows. At time $2$, send the blue (curly) coflow on all paths. The total weighted completion time is $1 + 1 + 1 + 2 = 5$.}
  \label{fig:freepath}
\end{figure}


\section{Linear Programming Relaxation}
\label{sec:lp_formulation}
We use a time-indexed linear program to model this problem. Let $T$ denote an upper bound on the total time required to schedule all the coflows. Note that $T$ might be super-polynomial if the release times or coflow sizes are large. However, there is a standard technique that achieves polynomial size at the cost of a $(1 + \epsilon)$ factor on approximation ratio. We will assume $T$ to be polynomial in the main paper, and present the fix for super-polynomial $T$ in \ifjournal{Appendix~\ref{sec:large_jobs}}\else{the full version}\fi .

Let time be slotted and time slot $t$ cover the interval of time $[t-1, t]$. For a given flow $f^i_j$ and a time slot $t$, we introduce the variable $x_j^i(t)$ to indicate the fraction of flow $f^i_j$ that is scheduled at time $t$.
For each coflow $F_j$, we introduce variables $X_j(t)$ to indicate if all the flows $f^i_j \in F_j$ have been completely scheduled by time $t$. Finally, we introduce a variable $C_j$ that models the completion time of coflow $F_j$.

To make the linear program compatible with both single path model and free path model, we exclude the flow constraints and edge bandwidth constraints for now and delay them to Section~\ref{sec:flow constraints}.

\begin{align}
  \text{Minimize} \quad &\sum_jw_jC_j\nonumber\\
  \text{Subject to} \quad &\sum_tx_{j}^i(t) = 1 && \forall j \in [n], \forall i \in [n_j]\label{lp1:job finish}\\
&X_j(t) \leq \sum_{\ell = 1}^t x_j^i(\ell) && \forall j \in [n], \forall i \in [n_j], \forall t \in T \label{lp1:cumulative}\\
&C_j \geq 1 + \sum_t (1 - X_j(t)) && \forall j\in [n] \label{lp1:completion}\\
&r_j^i \geq t \Rightarrow x_{j}^i(t) = 0 && \forall j \in [n],\forall i \in [n_j], \forall t\in T\label{lp1:release time}\\
&x_{j}^i(t) \geq 0 && \forall j \in [n],\forall i \in [n_j], \forall t\in T \label{lp1:positive}
\end{align}

Constraint \eqref{lp1:job finish} certifies that each flow is fully scheduled. Constraint \eqref{lp1:cumulative} ensures that coflow $F_j$ is considered completed at time $t$ only if all flows $f_j^i \in F_j$ have been fully scheduled by time $t$. In Proposition \ref{prop:lpfeasibility}, we show that Constraint \eqref{lp1:completion} enforces a valid lower bound on the completion time of coflow $F_j$. Finally, Constraint \eqref{lp1:release time} ensures that no flow is scheduled before it has been released. Note this is not a typical LP relaxation, since any fractional solution is valid. The main relaxation is around the completion time, since representing the exact completion time of job is beyond the capability of a linear program.

\begin{proposition}
\label{prop:lpfeasibility}
The completion time of a coflow $F_j$ can be lower bounded by $C_j \geq 1 + \sum_t(1 - X_j(t))$ where $X_j(t) \in [0,1]$ denotes the fraction of coflow $F_j$ that has been completed by (the end of) time slot $t$.
\end{proposition}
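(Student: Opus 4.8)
The plan is to argue directly from the definition of the completion time, using the single fact that once a coflow is finished its ``fraction completed'' variable is pinned at $1$. First I would record the two structural properties we need: since time is slotted and $F_j$ is declared complete at the earliest slot in which all of its flows are fully transmitted, $C_j$ is a positive integer, and $X_j(t) = 1$ for every $t \ge C_j$ (the transmitted fraction never decreases), while $X_j(t) \in [0,1]$ for every $t$ by definition of a fraction.

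Given these, the sum $\sum_t (1 - X_j(t))$ has only finitely many nonzero terms, namely those with $t \le C_j - 1$, so it equals $\sum_{t=1}^{C_j-1}(1 - X_j(t))$. Each summand lies in $[0,1]$, hence the sum is at most $C_j - 1$, and adding $1$ to both sides yields $1 + \sum_t(1 - X_j(t)) \le C_j$, which is exactly the claimed bound. Equivalently and perhaps more cleanly, one can write $C_j = 1 + \lvert\{t \ge 1 : X_j(t) < 1\}\rvert$ and observe $\lvert\{t : X_j(t) < 1\}\rvert \ge \sum_{t : X_j(t) < 1}(1 - X_j(t)) = \sum_t (1 - X_j(t))$, since the omitted terms are zero.

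There is no substantive obstacle here; the only thing to watch is the indexing convention, namely that slot $t$ covers $[t-1,t]$ and the leading $+1$ must be tracked so that ``finished during slot $C_j$'' contributes $C_j$, not $C_j - 1$, to the bound — an off-by-one would be the sole pitfall. Note also that the argument uses nothing about flow conservation, edge capacities, or which of the two transmission models is in force, which is precisely why this inequality can appear as a single constraint in a relaxation that serves both the single-path and free-path settings.
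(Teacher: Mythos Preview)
Your argument is correct, but it differs from the paper's. The paper starts from the standard fractional-completion-time bound $C_j \ge \sum_{t} t\cdot x_j(t)$ with $x_j(t)=X_j(t)-X_j(t-1)$, and then performs an interchange of summation (an Abel summation) to rewrite $\sum_t t\cdot x_j(t)$ as $1+\sum_{\tau\ge 1}(1-X_j(\tau))$. Your proof bypasses this machinery entirely: you simply note that $(1-X_j(t))$ vanishes for $t\ge C_j$ and is at most $1$ for $t\le C_j-1$, so the whole sum is at most $C_j-1$. Your route is shorter and more elementary, and it makes transparent exactly which structural facts are used (integrality of $C_j$, monotonicity of $X_j$, and $X_j(t)\in[0,1]$). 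The paper's route, on the other hand, explicitly identifies the constraint as a rearrangement of the classical $\sum_t t\cdot x_j(t)$ lower bound that pervades time-indexed scheduling LPs, which explains \emph{why} one would write the constraint in this form in the first place and situates it in the literature. Either derivation suffices for the proposition as stated.
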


\begin{proof}
Conventionally, in time-indexed linear programming relaxations, the completion time of a job $j$ is lower bounded by the fractional completion time in the schedule, or $C_j = C_j\cdot \sum_{t=1}^Tx_j(t) \geq \sum_{t=1}^Tt\cdot x_j(t)$. In our setting, this corresponds to the constraint $C_j \geq \sum_t t \cdot x_j(t)$ where $x_j(t) = X_j(t) - X_j(t-1)$ denotes the fraction of coflow $F_j$ that is scheduled during time slot $t$. The desired constraint in Eq \eqref{lp1:completion} is exactly the same constraint rearranged in a format that is more convenient for analysis.
\begin{align*}
  C_j &\geq \sum_{t = 1}^{T}t\cdot x_j(t) = \sum_{t = 1}^{T}x_j(t)\sum_{\tau=1}^t1\\
      &= \sum_{\tau=1}^{T}\sum_{t\geq \tau}^{T}x_j(t) = \sum_{\tau=1}^{T}\left(\sum_{t = 1}^{T}x_j(t) - \sum_{t = 1}^{\tau - 1}x_j(t) \right)\\
      &= \sum_{\tau=1}^{T}(1 - X_j(\tau - 1)) = \sum_{\tau=0}^{T- 1}(1 - X_j(\tau)) \\
      &= 1 + \sum_{\tau=1}^{T- 1}(1 - X_j(\tau))
\end{align*}
\end{proof}

\subsection{Model-specific Constraints}
\label{sec:flow constraints}

\subsubsection{Single Path Model}
\label{sec:path_given}

In the single path model, a flow $f^i_j$ can only be routed along a specified path $p^i_j$. Thus, we do not need to make any routing decisions in the linear program and only need to ensure that edge bandwidths are respected.
\begin{align}
  \sum_{p_j^i\ni e} x^i_j(t)\cdot \sigma^i_j &\leq c(e), && \forall e\in E, \forall t\in T\label{eq:path_not_given_constraint}
\end{align}
Constraint \eqref{eq:path_not_given_constraint} enforces that the total flow scheduled through edge $e$ at any time slot $t$ does not exceed the edge bandwidth. Constraints \eqref{lp1:job finish}-\eqref{eq:path_not_given_constraint} thus form the complete linear programming relaxation for coflow scheduling in the single path model.

\subsubsection{Free Path Model}
\label{sec:path_not_given}

In the free path model, the path for flow $f^i_j$ is not specified. In fact, data can split and merge at vertices to utilize all possible capacity. We use variable $x^i_j(t, e)$ to denote the fraction of flow $f^i_j$ transmitted through edge $e$ in time slot $t$. Recall that we use $x^i_j(t)$ to denote the total fraction of flow $f^i_j$ that is transmitted in time slot $t$. $\delta_{in}(v)$ ($\delta_{out}$) represents the set of edges that comes in (out of) vertex $v$. Here are the flow conservation constraints we need.

\begin{align}
  &\sum_{e \in \delta_{out}(s_j^i)}x^i_j(t, e) = x^i_j(t), &&\hspace{-5mm}\forall j \in [n],\forall i \in [n_j], \forall t\in T\label{eq:source}\\
  &\sum_{e \in \delta_{in}(t_j^i)}x^i_j(t, e) = x^i_j(t), &&\hspace{-5mm}\forall j \in [n],\forall i \in [n_j], \forall t\in T\label{eq:sink}\\
  &\sum_{e \in \delta_{in}(v)}x^i_j(t, e) = \sum_{e \in \delta_{out}(v)}x^i_j(t, e), &&\forall j \in [n],\forall i \in [n_j], \forall t\in T, \nonumber\\
  &&&\quad\forall v\in V\backslash \{s_j^i, t_j^i\}\label{eq:conservation}\\
  &\sum_{j \in [n],i \in [n_j]} x^i_j(t, e)\cdot \sigma^i_j \leq c(e), &&\forall t\in T, \forall e\in E\label{eq:capacity}
\end{align}

Constraints~\eqref{eq:source} and \eqref{eq:sink} enforce that the total fraction of flow $f^i_j$ satisfied at time $t$ over all the paths is exactly $x_j^i(t)$. Constraints~\eqref{eq:conservation} ensure flow conservation at all nodes other than source and sink. Constraints~\eqref{eq:capacity} guarantee that all edge bandwidths are satisfied at all time steps.
Constraints \eqref{lp1:job finish}-\eqref{lp1:positive} and \eqref{eq:source}-\eqref{eq:capacity} thus form the complete linear programming relaxation for coflow scheduling in the free path model.

Let $C_j^*$ denote the completion time of coflow $F_j$ in an optimal solution of the LP relaxation, and let $C_j(opt)$ denote the completion time of coflow $F_j$ in the corresponding optimal integral solution. Thus, for both the models, we have
\begin{align}
\sum_j w_j C_j^* &\leq \sum_j w_j C_j(opt). \label{eq:lp_opt}
\end{align}


\section{Approximation Algorithms}

Let $x_j^i(t)$ denote the fraction of flow $f_j^i$ that is scheduled at time step $t$ in an optimal solution to the above LP. The LP constraints guarantee that this yields a feasible schedule to the coflow scheduling problem (in both the single path as well as the free path models). However, since the completion time of a coflow $F_j$ is defined as the earliest time $t$ such that all flows $f_j^i \in F_j$ have been completely scheduled, the true completion time of coflow $F_j$ obtained in this scheduled is given by
\begin{align}
C_j(\text{\it LP Sched}) = \max_i \{\max_{t : x_j^i(t) > 0} [t]\}.
\end{align}
Unfortunately, this completion time $C_j(\text{\it LP Sched})$ can be much greater than the completion time variable in the optimal LP solution $C_j^*$, and thus the obtained schedule is not a constant-approximate coflow schedule. For instance, consider a coflow $F_j$ with only one flow ($n_j = 1$) and let the optimal LP solution set its schedule as follows $x^1_j(1) = 0.9, x^1_j(10) = 0.1$, and $x^1_j(t) = 0, \forall t \notin \{1,10\}$. Now, the completion time variable in the optimal LP solution is $C_j^* = \sum_t t x_j^1(t) = 1.9$. However, true completion time of the  coflow $F_j$ in such a schedule is $C_j(\text{\it LP Sched}) = 10 \gg C_j^*$.

To overcome the obstacle above, we propose the following algorithm called \textsf{Stretch} (see Section~\ref{sec:algorithm}) that modifies the schedule obtained by the linear program so that the completion time of each coflow in the modified schedule can be compared with the completion time variable of the corresponding coflow in an optimal LP solution. The schedule ``stretching'' idea (also called `slow-motion') used in our algorithm has been used before successfully in other scheduling contexts~\cite{im2014preemptive,queyranne20022+,schulz1997random}.

\subsection{\textsf{Stretch} Algorithm}
\label{sec:algorithm}
\begin{enumerate}
  \item Solve the linear program in Section~\ref{sec:lp_formulation} and obtain a fractional optimal solution.
  \item Let $\lambda \in (0,1)$ be drawn randomly according to the p.d.f $f(v) = 2v$. We can verify that this is indeed a valid probability distribution.
  \item Stretch the LP schedule by $\frac{1}{\lambda}$. This means that we schedule everything exactly as per the LP solution - but whatever LP schedules in the interval $[a,b]$, we will schedule in the interval $[\frac{a}{\lambda}, \frac{b}{\lambda}]$.
  \item Once $\sigma_j^i$ units of flow $f_j^i$ have been scheduled, leave the remaining slots for $f_j^i$ empty.
\end{enumerate}

\begin{figure}[htbp]
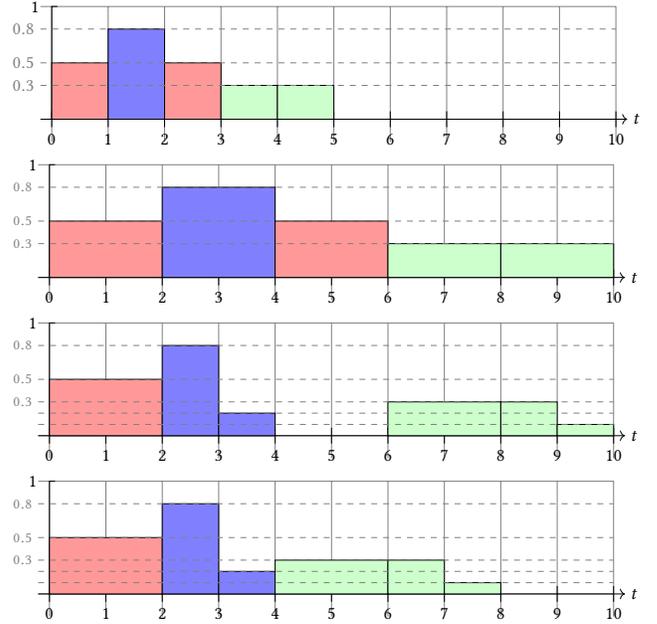

  \centering
  \scalebox{0.75}{\includestandalone{img/stretching_0}}
  \scalebox{0.75}{\includestandalone{img/stretching_1}}
  \scalebox{0.75}{\includestandalone{img/stretching_2}}
  \scalebox{0.75}{\includestandalone{img/stretching_3}}
  \caption{Here we show an example solution obtained from the LP, different color indicate different flows. In the second picture, we stretch with $\lambda = 0.5$. In the third picture, we leave the slots empty if the corresponding flow is finished. In the fourth picture, we utilize the idle slots and move some flows to earlier times. Though this does not improve the theoretically bound, it is beneficial in practice and is used in our experimental evaluation.}
  \label{pic:stretching}
\end{figure}

Figure \ref{pic:stretching} illustrates the key ideas of the algorithm.
To help understand this algorithm, start with the simple case where we have a fixed $\lambda = 0.5$, in other words stretch the time axis by a factor of $1/\lambda = 2$. Intuitively, we move everything at time slot $t$ and to both time slots $2t-1$ and $2t$. What used to be transmitted at time $t$ will be transmitted no later than time $2t$. Consider any flow $f^i_j$ and let $\tau$ denote the earliest time by which the LP has scheduled at least $1/2$ fraction of the flow. Then, it is easy to verify that the flow $f^i_j$ is completely scheduled by time $2\tau$.

Now we consider a general $\lambda$ and prove that this algorithm does output a feasible schedule. Due to fractional $\lambda$, it might be the case that some flow $f_j^i$ of LP variable $x_j^i(t)$ in integral interval $[t - 1, t]$ becomes $[\frac{t - 1}{\lambda}, \frac{t}{\lambda}]$, a fractional interval. In this case, for a time slot $\tau$, or a interval $[\tau - 1, \tau]$ after stretching, we just add $x_j^i(t) \cdot |[\tau - 1, \tau]\cap [\frac{t - 1}{\lambda}, \frac{t}{\lambda}]|$.

The only flows that might be scheduled in time slot $\tau$ are those scheduled in time slot $1 + \lfloor \lambda(\tau - 1) \rfloor$ and $1 + \lfloor \lambda \tau \rfloor$ before stretching, or flows $f_j^i(1 + \lfloor \lambda(\tau - 1) \rfloor)$ and flows $f_j^i(1 + \lfloor \lambda \tau \rfloor)$. (The two time slots might be the same. If so, feasibility is automatically met. Otherwise, we have $1 + \lfloor \lambda(\tau - 1) \rfloor + 1 = 1 + \lfloor \lambda \tau \rfloor$.) For all flows at time $1 + \lfloor \lambda(\tau - 1) \rfloor$ before stretching, the factor we multiplied with is $w_1 = \left| [\tau - 1, \tau] \cap [\frac{\lfloor \lambda(\tau - 1) \rfloor}{\lambda}, \frac{1 + \lfloor \lambda(\tau - 1) \rfloor}{\lambda}]\right|$. For all flows at time $ 1 + \lfloor \lambda \tau \rfloor$ before stretching, the factor we use to multiply with is $w_2 = \left| [\tau - 1, \tau] \cap [\frac{\lfloor \lambda \tau \rfloor}{\lambda}, \frac{1 + \lfloor \lambda \tau\rfloor}{\lambda}]\right|$. Note $w_1 +w_2 = 1$. In fact, the schedule at time $\tau$ can be viewed as a weighted average of the schedule at time $[\lfloor \lambda (\tau - 1) \rfloor, 1 + \lfloor \lambda(\tau - 1) \rfloor]$ and $[\lfloor \lambda \tau \rfloor, 1 + \lfloor \lambda \tau \rfloor]$ (if $\lambda(\tau - 1)$ is a integer, then the schedule will be exactly what it used to be at time $\lambda\tau$), the first with weight $w_1$ and the second with weight $w_2$. The nature of network flow ensures that the weighted sum of two feasible flows is a feasible flow.

Another fact that needs proof is that every flow is finished. This is guaranteed since schedules are stretched, and we only leave the remaining slots empty for $f_j^i$ if $\sigma_j^i$ units of flow have been scheduled, or in other words, all the demand for this flow has been scheduled.

\subsection{Analysis}
\label{sec:analysis-random}

Recall that $C_j^*$ denotes the completion time of coflow $F_j$ in the optimal LP solution. While we consider that time is slotted in the LP formulation and time slot $t$ covers the interval of time $[t-1, t]$, at this stage it is more convenient to work with continuous time rather than discrete time. 
For any continuous time $\tau \in [0,T]$, define $X_j(\tau)$ to be the fraction of coflow $F_j$ that has been scheduled in the LP solution by time $\tau$. We define $X_j(\tau)$ by assuming that the flow is scheduled at an uniform rate in every time slot. Formally, we have
\begin{align}
X_j(\tau) = X_j(\lfloor \tau \rfloor) + (\tau - \lfloor \tau \rfloor) \left( X_j(\lfloor \tau \rfloor + 1) - X_j(\lfloor \tau \rfloor) \right).
 \label{eq:tau}
\end{align}

The LP constraints \eqref{lp1:completion} guarantee that for any coflow $F_j$, we have $C_j^* \geq 1 + \sum_t(1 - X_j(t))$.
We can now lower-bound the LP completion time by replacing the above summation by an integral. 

\begin{lemma}
\label{lem:integral}
$\int_{\tau = 0}^T (1 - X_j(\tau)) d\tau \leq C_j^* - \frac{1}{2}$ where $X_j(\tau)$ is defined as per Eq. \eqref{eq:tau}.
\end{lemma}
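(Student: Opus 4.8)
The plan is to evaluate the integral exactly by exploiting that $X_j(\tau)$, as defined by Eq.~\eqref{eq:tau}, is piecewise linear with breakpoints at the integers. First I would split the integral into unit slots, $\int_0^T (1 - X_j(\tau))\,d\tau = \sum_{t=1}^T \int_{t-1}^t (1 - X_j(\tau))\,d\tau$. On each slot $[t-1,t]$ the integrand is affine, so its integral equals the average of its values at the two endpoints (the trapezoid rule, which is exact for affine functions): $\int_{t-1}^t (1 - X_j(\tau))\,d\tau = 1 - \frac{1}{2}\bigl(X_j(t-1) + X_j(t)\bigr)$.

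Summing over $t = 1, \ldots, T$ gives $\int_0^T (1 - X_j(\tau))\,d\tau = T - \frac{1}{2}\sum_{t=1}^T \bigl(X_j(t-1)+X_j(t)\bigr)$. The inner sum reorganizes into $2\sum_{t=1}^{T-1} X_j(t) + X_j(0) + X_j(T)$; using the boundary values $X_j(0) = 0$ (nothing is scheduled before time $0$) and $X_j(T) = 1$ (Constraint~\eqref{lp1:job finish} forces every flow, hence the whole coflow, to be completed by the horizon $T$), this equals $2\sum_{t=1}^{T-1} X_j(t) + 1$. Substituting back and rewriting $T - \sum_{t=1}^{T-1} X_j(t)$ as $1 + \sum_{t=1}^{T-1}(1 - X_j(t))$ yields $\int_0^T (1 - X_j(\tau))\,d\tau = 1 + \sum_{t=1}^{T-1}(1 - X_j(t)) - \frac{1}{2}$.

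Finally I would invoke Constraint~\eqref{lp1:completion} in the form $C_j^* \geq 1 + \sum_{t=1}^{T-1}(1 - X_j(t))$, as established in Proposition~\ref{prop:lpfeasibility}, to conclude $\int_0^T(1 - X_j(\tau))\,d\tau \le C_j^* - \frac{1}{2}$. There is no substantive obstacle here; the only thing to watch is the bookkeeping of the boundary terms $X_j(0)$ and $X_j(T)$, ensuring that the discrete sum produced by the trapezoid rule matches exactly the one appearing in Constraint~\eqref{lp1:completion}, with the $-\frac{1}{2}$ arising precisely from the two half-weighted endpoint contributions of the trapezoidal sum.
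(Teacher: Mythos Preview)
Your proposal is correct and follows essentially the same approach as the paper's own proof: split the integral into unit intervals, use the exact (trapezoidal) value on each affine piece, collect boundary terms with $X_j(0)=0$ and $X_j(T)=1$, and then apply Constraint~\eqref{lp1:completion}. The only cosmetic differences are that the paper first writes $T-\int_0^T X_j(\tau)\,d\tau$ before splitting and carries out the integral of the linear interpolant explicitly rather than citing the trapezoid rule; the resulting algebra and the final $-\tfrac{1}{2}$ from the half-weighted endpoints are identical.
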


\begin{proof}
By definition of $X_j(\tau)$, we have the following.
\begin{align*}
    \int_{\tau = 0}^T (1 - &X_j(\tau)) d\tau = T - \int_{\tau=0}^T X_j(\tau) d\tau\\
    &= T - \sum_{t=0}^{T-1} \int_{\tau=t}^{t+1} X_j(\tau) d\tau\\
    &= T - \sum_{t=0}^{T-1} \int_{\tau=t}^{t+1} \left[X_j(t) + (\tau - t) \left( X_j(t + 1) - X_j(t) \right)\right] d\tau\\
    &= T - \sum_{t=0}^{T-1} \left[X_j(t) + \left( X_j(t + 1) - X_j(t) \right) \int_{\tau=t}^{t+1}  (\tau - t) d\tau \right]\\
    &= T - \sum_{t=0}^{T-1} \frac{1}{2}\left[X_j(t) + X_j(t + 1) \right]\\
    &= T - \left[\frac{1}{2} \left(X_j(0) + X_j(T)\right) +  \sum_{t=1}^{T-1} X_j(t) \right]
\intertext{Since by definition, $X_j(0) = 0$ and $X_j(T) = 1$, we get}
    &= T - \left[\frac{1}{2} +  \sum_{t=1}^{T-1} X_j(t) \right]
\intertext{Rearranging the terms, we get}
    &= 1 + \sum_{t=1}^{T-1}(1 - X_j(t)) - \frac{1}{2}\\
    &\leq C_j^* - \frac{1}{2}
\end{align*}
where the last inequality follows from Constraint \eqref{lp1:completion}.
\end{proof}

For any $\lambda \in [0,1]$, define $C_j^*(\lambda)$ to be the earliest time $\tau$ such that $\lambda$ fraction of the coflow $F_j$ has been scheduled in the LP solution, i.e., in other words its the smallest $\tau$ such that $X_j(\tau) = \lambda$. Note that by time $C_j^*(\lambda)$, $\lambda$ fraction of \emph{every flow $f^i_j \in F_j$} has been scheduled by the LP.

\begin{proposition}
\label{prop:counting}
$ \displaystyle \int_{\lambda = 0}^1 C^*_j(\lambda) \mathbf{d}\lambda = \int_{\tau = 0}^T (1 - X_j(\tau)) \mathbf{d}\tau$
\end{proposition}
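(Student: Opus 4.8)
The plan is to recognize $C_j^*(\cdot)$ as the (left-continuous) inverse of the function $X_j(\cdot)$ and then evaluate the area of a single planar region in two different orders — a Tonelli/layer-cake argument. First I would record the properties of $X_j$ that make this work: by Eq.~\eqref{eq:tau} the map $\tau \mapsto X_j(\tau)$ is piecewise linear (with breakpoints only at integers), hence continuous; since $x_j^i(t) \ge 0$ for all $i,t$, it is non-decreasing; and $X_j(0) = 0$, $X_j(T) = 1$. Continuity and monotonicity then imply that for every $\lambda \in [0,1]$ the set $\{\tau : X_j(\tau) \ge \lambda\}$ has a minimum, that $X_j$ equals $\lambda$ at that minimum, and hence that $C_j^*(\lambda)$ is well defined and satisfies the key equivalence
\[
C_j^*(\lambda) \le \tau \quad\Longleftrightarrow\quad \lambda \le X_j(\tau) \qquad \text{for all } (\tau,\lambda)\in[0,T]\times[0,1].
\]

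Given this equivalence I would consider the region
\[
R = \{(\tau,\lambda) : 0 \le \tau \le T,\ 0 \le \lambda \le X_j(\tau)\} \subseteq [0,T]\times[0,1]
\]
and compute its two-dimensional Lebesgue measure by Tonelli's theorem in both orders. Slicing at fixed $\tau$, the $\lambda$-slice is an interval of length $X_j(\tau)$, so $|R| = \int_0^T X_j(\tau)\,d\tau$. Slicing at fixed $\lambda \in [0,1]$, the displayed equivalence shows the $\tau$-slice is exactly $[C_j^*(\lambda), T]$, of length $T - C_j^*(\lambda)$, so $|R| = \int_0^1 (T - C_j^*(\lambda))\,d\lambda = T - \int_0^1 C_j^*(\lambda)\,d\lambda$. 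Equating the two expressions for $|R|$ and rearranging gives $\int_0^1 C_j^*(\lambda)\,d\lambda = T - \int_0^T X_j(\tau)\,d\tau = \int_0^T (1 - X_j(\tau))\,d\tau$, which is the claim.

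The only step requiring genuine care is the first one: confirming that $C_j^*(\lambda)$, defined as the \emph{earliest} time at which the LP has scheduled a $\lambda$ fraction of $F_j$, really is the pointwise inverse of $X_j$, in particular on the (finitely many) intervals where $X_j$ is constant and on which $C_j^*$ is therefore locally constant in $\lambda$. Once the equivalence $C_j^*(\lambda)\le\tau \Leftrightarrow \lambda \le X_j(\tau)$ is nailed down, the remainder is a routine exchange in the order of integration, and no structural property of the coflow instance is used beyond monotonicity and continuity of $X_j$ together with the normalizations $X_j(0)=0$ and $X_j(T)=1$. (An alternative that avoids the geometric picture is to integrate $C_j^*(\lambda)$ by parts against $dX_j(\tau)$ piece by piece, but the area argument is cleaner.)
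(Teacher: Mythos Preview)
Your proof is correct and is essentially the same Fubini/layer-cake argument the paper gives: the paper writes $C_j^*(\lambda)=\int_0^T \mathbbm{1}_{[C_j^*(\lambda)>\tau]}\,d\tau$, swaps the order of integration, and uses the equivalence $C_j^*(\lambda)>\tau \Leftrightarrow \lambda>X_j(\tau)$, which is the contrapositive of the equivalence you verify. The only cosmetic difference is that you compute the area of the region below the graph of $X_j$ while the paper works directly with the complementary region above it; your write-up is more careful in spelling out why $C_j^*$ is a well-defined inverse, which the paper leaves implicit.
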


\begin{proof}
\begin{align*}
\int_{\lambda = 0}^1 C^*_j(\lambda) \mathbf{d}\lambda &= \int_{\lambda = 0}^1 \int_{\tau = 0}^T \mathbbm{1}_{[C^*_j(\lambda) > \tau]} \mathbf{d}\tau \mathbf{d}\lambda\\
&= \int_{\tau = 0}^T \int_{\lambda = 0}^1 \mathbbm{1}_{[C^*_j(\lambda) > \tau]} \mathbf{d}\lambda \mathbf{d}\tau\\
&= \int_{\tau = 0}^T \int_{\lambda = X_j(\tau)}^1 1 \mathbf{d}\lambda \mathbf{d}\tau\\
&= \int_{\tau = 0}^T (1 - X_j(\tau)) \mathbf{d}\tau
\end{align*}
\end{proof}

Finally, we are ready to bound the completion time of coflow $F_j$ in the stretched schedule (denoted as $C_j(alg)$). For any fixed $\lambda \in (0,1)$, since we stretch the schedule by a factor of $\frac{1}{\lambda}$, we have $C_j(alg) \leq \left\lceil \frac{C_j^*(\lambda)}{\lambda}\right\rceil$. Notice the ceiling function in the bound~\footnote{All flows $f_j^i \in F_j$ were completed by at least $\lambda$ fraction by time $C^*_j(\lambda)$. So in the stretched schedule, all those flows must be completed by  time $\frac{C^*_j(\lambda)}{\lambda}$. The ceiling is necessary since $\frac{C^*_j(\lambda)}{\lambda}$ may be fractional (i.e. occur in the middle of a time slot)}. Since $\lambda$ is drawn randomly from a distribution, the following lemma bounds the expected completion time of coflow $F_j$ in the stretched schedule.

\begin{lemma}
The expected completion time of any coflow $F_j$ in the stretched schedule is bounded by $2C_j^*$.
\end{lemma}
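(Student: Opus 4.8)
The plan is to bound the random completion time $C_j(alg)$ pointwise in $\lambda$, integrate against the density $f(v)=2v$, and then substitute in the two facts already established, namely Proposition~\ref{prop:counting} and Lemma~\ref{lem:integral}. The point of the density $f(v)=2v$ is that the factor $2\lambda$ it contributes is exactly what cancels the stretch factor $\tfrac1\lambda$.

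First I would record the pointwise bound, which is essentially the remark immediately preceding the statement: for any fixed $\lambda\in(0,1)$, every flow $f^i_j\in F_j$ has at least a $\lambda$ fraction scheduled by (continuous) time $C^*_j(\lambda)$ in the LP, hence after stretching by $\tfrac1\lambda$ every such flow is fully scheduled by time $C^*_j(\lambda)/\lambda$; since a coflow completes only at the end of an integral time slot, this gives
\begin{align*}
C_j(alg)\ \le\ \left\lceil \frac{C^*_j(\lambda)}{\lambda}\right\rceil\ \le\ \frac{C^*_j(\lambda)}{\lambda}+1.
\end{align*}
Next I would take expectation over $\lambda$ drawn from the density $2v$ on $(0,1)$:
\begin{align*}
\mathbb{E}\big[C_j(alg)\big]\ \le\ \int_{0}^{1}\left(\frac{C^*_j(\lambda)}{\lambda}+1\right)2\lambda\ \mathbf{d}\lambda\ =\ 2\int_{0}^{1}C^*_j(\lambda)\,\mathbf{d}\lambda\ +\ 1 ,
\end{align*}
using $\int_0^1 2\lambda\,\mathbf{d}\lambda = 1$ and the cancellation $\frac{C^*_j(\lambda)}{\lambda}\cdot 2\lambda = 2\,C^*_j(\lambda)$. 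Now Proposition~\ref{prop:counting} rewrites $\int_0^1 C^*_j(\lambda)\,\mathbf{d}\lambda = \int_0^T (1-X_j(\tau))\,\mathbf{d}\tau$, and Lemma~\ref{lem:integral} bounds the latter by $C^*_j-\tfrac12$, so that
\begin{align*}
\mathbb{E}\big[C_j(alg)\big]\ \le\ 2\Big(C^*_j-\tfrac12\Big)+1\ =\ 2C^*_j .
\end{align*}

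The only delicate point is the ceiling: a crude bound on $\lceil\,\cdot\,\rceil$ costs an additive $1$, which on its own would only yield $2C^*_j+1$. We still land on exactly $2C^*_j$ because Lemma~\ref{lem:integral} carries the extra $-\tfrac12$ slack, which after doubling becomes a $-1$ that precisely absorbs the ceiling loss. So the main (and only) thing to be careful about is this $\tfrac12$-versus-$1$ bookkeeping; the rest is the mechanical substitution of Proposition~\ref{prop:counting} and Lemma~\ref{lem:integral} together with the elementary integrals above. (Summing the per-coflow bound against the weights $w_j$ and comparing with \eqref{eq:lp_opt} then yields the overall $2$-approximation, but that is outside the present statement.)
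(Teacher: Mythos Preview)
Your proof is correct and follows essentially the same route as the paper: bound $C_j(alg)\le\lceil C^*_j(\lambda)/\lambda\rceil\le C^*_j(\lambda)/\lambda+1$, integrate against the density $2\lambda$ to get $2\int_0^1 C^*_j(\lambda)\,\mathbf{d}\lambda+1$, then apply Proposition~\ref{prop:counting} and Lemma~\ref{lem:integral} to obtain $2(C^*_j-\tfrac12)+1=2C^*_j$. Your remark about the $-\tfrac12$ slack absorbing the ceiling loss is exactly the point.
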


\begin{proof}
\begin{align*}
\mathbb{E}[C_j(alg)] &\leq \int_{\lambda=0}^1 f(\lambda) \left\lceil \frac{C^*_j(\lambda)}{\lambda}\right\rceil d\lambda\\
&\leq\int_{\lambda=0}^1 \left(2 \lambda\right) \left(\frac{C^*_j(\lambda)}{\lambda} + 1 \right) d\lambda\\
&= 2 \int_{\lambda=0}^1 C^*_j(\lambda) d\lambda + 1
\intertext{By Lemma \ref{lem:integral} and Proposition \ref{prop:counting},}
&= 2 \int_{\tau = 0}^T (1 - X_j(\tau)) \mathbf{d}\tau + 1\\
&\leq 2 \left(C_j^* - \frac{1}{2}\right) + 1 = 2 C_j^*
\end{align*}
\end{proof}

Theorem \ref{thm:main} thus follows from the linearity of expectation.

\begin{theorem}
\label{thm:main} There is a randomized 2-approximation algorithm for coflow scheduling  in networks in both the single path and free path models when all release times and coflow sizes are polynomially sized.
\end{theorem}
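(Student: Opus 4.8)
The plan is to combine the structural lemmas already established with a relaxation argument and linearity of expectation. The algorithm is \textsf{Stretch}: solve the LP of Section~\ref{sec:lp_formulation} to get a fractional optimal schedule with completion-time variables $C_j^*$, draw $\lambda \in (0,1)$ from the density $f(v) = 2v$, stretch the fractional schedule by the factor $1/\lambda$, and truncate each flow $f_j^i$ once its demand $\sigma_j^i$ has been transmitted. The first step is to argue that this output is a feasible schedule for the coflow problem in both models. Two things must be checked: (i) each time slot $\tau$ of the stretched schedule carries a feasible transmission — a feasible multicommodity flow respecting edge capacities in the free path model, and a capacity-respecting transmission along the fixed paths in the single path model; and (ii) every flow is completely scheduled. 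For (i) the key observation (already made in Section~\ref{sec:algorithm}) is that, after handling fractional interval boundaries, the contents of stretched slot $\tau$ form a convex combination, with weights $w_1,w_2 \ge 0$ and $w_1+w_2=1$, of the contents of two consecutive original slots; since the feasible transmissions in both models are described by polytopes (flow-conservation and capacity constraints are linear), a convex combination of feasible transmissions is feasible. For (ii), stretching only spreads a flow out in time, and truncation happens exactly when the demand is exhausted, so Constraint~\eqref{lp1:job finish} guarantees completion.

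Next I would fix $\lambda$ and bound the completion time $C_j(alg)$ of each coflow $F_j$ in the stretched schedule. Let $C_j^*(\lambda)$ denote the earliest time at which the LP has scheduled a $\lambda$-fraction of \emph{every} flow of $F_j$. By that time each flow of $F_j$ is $\lambda$-complete, so after stretching by $1/\lambda$ all of $F_j$ finishes by time $C_j^*(\lambda)/\lambda$; rounding up to the next integral slot gives $C_j(alg) \le \lceil C_j^*(\lambda)/\lambda \rceil \le C_j^*(\lambda)/\lambda + 1$.

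Then I would take expectations over $\lambda$ drawn from $f(\lambda) = 2\lambda$:
\[
\mathbb{E}[C_j(alg)] \le \int_0^1 2\lambda \left( \frac{C_j^*(\lambda)}{\lambda} + 1 \right) d\lambda = 2 \int_0^1 C_j^*(\lambda)\, d\lambda + 1.
\]
By Proposition~\ref{prop:counting} the integral equals $\int_0^T (1 - X_j(\tau))\, d\tau$, and by Lemma~\ref{lem:integral} this is at most $C_j^* - \frac{1}{2}$, hence $\mathbb{E}[C_j(alg)] \le 2(C_j^* - \frac{1}{2}) + 1 = 2 C_j^*$. Finally, linearity of expectation gives $\mathbb{E}\big[\sum_j w_j C_j(alg)\big] \le 2 \sum_j w_j C_j^*$, and since the LP is a valid relaxation, inequality~\eqref{eq:lp_opt} yields $\sum_j w_j C_j^* \le \sum_j w_j C_j(opt)$, so the expected cost is at most twice the optimum. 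Because the release times and coflow sizes are polynomially sized, $T$ is polynomial, so the LP has polynomial size and can be solved in polynomial time, and the rounding step is clearly polynomial; this gives a randomized $2$-approximation.

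The main obstacle is the feasibility argument for the stretched schedule — carefully handling the fractional interval boundaries that arise when $1/\lambda$ is irrational, verifying that stretched slot $\tau$ is genuinely a convex combination of two consecutive original slots with weights $w_1,w_2$ summing to $1$, and confirming that convexity of the feasible region suffices in both transmission models (trivial along fixed paths, and a consequence of the polyhedral description of multicommodity flow in the free path model). Once feasibility is in place, the expectation computation and the invocation of Lemma~\ref{lem:integral} and Proposition~\ref{prop:counting} are routine.
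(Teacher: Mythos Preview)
Your proposal is correct and follows essentially the same route as the paper: you invoke the feasibility argument from Section~\ref{sec:algorithm} (stretched slots as convex combinations of original slots), bound $C_j(alg)\le\lceil C_j^*(\lambda)/\lambda\rceil$, integrate against the density $2\lambda$, and then apply Proposition~\ref{prop:counting} and Lemma~\ref{lem:integral} together with linearity of expectation and~\eqref{eq:lp_opt}. This is exactly the paper's argument, including the $+1$ absorbed by the $-\tfrac{1}{2}$ from Lemma~\ref{lem:integral}.
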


For the case where the total time we need to schedule all coflows is super-polynomial, we use the standard trick of geometric series time intervals, and claim the following theorem. Proof comes in \ifjournal{Appendix~\ref{sec:large_jobs}}\else{the full version}\fi .

\begin{theorem}
  \label{thm:main_super_polynomial}
  For any $\epsilon > 0$, there is a randomized $(2 + \epsilon)$-approximation algorithm for coflow scheduling  in networks in both the single path and the free path models (with possibly super polynomial release times and demands).
\end{theorem}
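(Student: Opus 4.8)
The plan is to reduce Theorem~\ref{thm:main_super_polynomial} to Theorem~\ref{thm:main} by replacing the unit-width time slots of the linear program with a polynomial number of geometrically growing intervals and then running \textsf{Stretch} essentially verbatim on this coarser grid.

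Fix $\epsilon'>0$, to be chosen as a small constant multiple of $\epsilon$, and fix a crude explicit upper bound $T$ on the makespan of the optimal schedule (e.g.\ the largest release time plus the total demand divided by the smallest edge capacity, which already bounds the cost of processing the flows one at a time); $\log T$ is polynomial in the input size even though $T$ itself need not be. First I would round every release time up to the nearest integer power of $1+\epsilon'$, which scales each $r_j^i$ by at most $1+\epsilon'$ and hence inflates the optimum by at most that factor. Next, partition the relevant part of the time axis into consecutive intervals $I_\ell=(L_\ell,R_\ell]$ with $L_\ell=(1+\epsilon')^{\ell-1}$ and $R_\ell=(1+\epsilon')^{\ell}$; there are $O\big(\frac{\log T}{\epsilon'}\big)$ of them, a polynomial. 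I would then write the interval-indexed analogue of the LP of Section~\ref{sec:lp_formulation}: variables $x_j^i(\ell)$, and, in the free path model, $x_j^i(\ell,e)$, for the fraction of $f_j^i$ transmitted during $I_\ell$; constraints \eqref{lp1:job finish}, \eqref{lp1:cumulative}, \eqref{lp1:positive} carried over with $\ell$ replacing $t$; the capacity and flow-conservation constraints \eqref{eq:path_not_given_constraint} and \eqref{eq:source}--\eqref{eq:capacity} with right-hand side $c(e)$ replaced by $c(e)\cdot|I_\ell|$; the release constraint forcing $x_j^i(\ell)=0$ whenever $R_\ell$ is at most the (rounded) release time; and the completion constraint $C_j\ge\sum_\ell L_\ell\big(X_j(\ell)-X_j(\ell-1)\big)$, charging each fraction of $F_j$ that completes during $I_\ell$ to the left endpoint of $I_\ell$. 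Coarsening any feasible integral schedule onto these intervals keeps it feasible---the per-instant capacity bound implies the per-interval bound---and does not increase its charged cost, so the optimum of this LP is at most $(1+\epsilon')\,\mathrm{OPT}$.

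Now I would run \textsf{Stretch} on the optimal LP solution, reading it as the continuous schedule that transmits $f_j^i$ at the uniform rate $x_j^i(\ell)\sigma_j^i/|I_\ell|$ throughout $I_\ell$ (routed, in the free path model, along the feasible flow $x_j^i(\ell,\cdot)$). Stretching by $1/\lambda$ maps $I_\ell$ to $(L_\ell/\lambda,R_\ell/\lambda]$; these windows still tile the axis without overlap, so at each real time the stretched schedule simply copies the feasible flow of a single LP interval, and step~4 still truncates a flow once its demand is met; feasibility and the respecting of rounded release times (a flow starts no earlier than $L_m/\lambda\ge L_m$, where $L_m$ is its rounded release time) therefore go through as in Section~\ref{sec:algorithm}, in fact more simply than the discretized argument there. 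Because we output this stretched schedule directly as a succinctly described continuous schedule rather than re-discretizing it, there is no ceiling in the bound and $C_j(alg)=C_j^*(\lambda)/\lambda$ exactly, where $C_j^*(\lambda)$ is the earliest time by which the LP has scheduled a $\lambda$ fraction of every flow of $F_j$. Taking the expectation over $\lambda\sim f$ and invoking Proposition~\ref{prop:counting} (valid for any $X_j$) gives $\mathbb{E}[C_j(alg)]=2\int_0^T(1-X_j(\tau))\,d\tau$, and re-running the trapezoidal computation of Lemma~\ref{lem:integral} over this non-uniform grid yields $\int_0^T(1-X_j(\tau))\,d\tau\le\sum_\ell R_\ell\big(X_j(\ell)-X_j(\ell-1)\big)=(1+\epsilon')\sum_\ell L_\ell\big(X_j(\ell)-X_j(\ell-1)\big)\le(1+\epsilon')\,C_j^*$. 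Hence $w_j\,\mathbb{E}[C_j(alg)]\le 2(1+\epsilon')\,w_j C_j^*$; summing over $j$ and folding in the $(1+\epsilon')$ loss from release-time rounding gives an expected objective of at most $2(1+\epsilon')^2\,\mathrm{OPT}\le(2+\epsilon)\,\mathrm{OPT}$ once $\epsilon'$ is a small enough constant multiple of $\epsilon$.

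The step I expect to be the main obstacle is precisely this completion-time accounting on the non-uniform grid: the charging rule has to be simultaneously a valid LP lower bound, so that $\mathrm{LP}\le(1+O(\epsilon'))\mathrm{OPT}$, and tight enough that the continuous integral $\int_0^T(1-X_j)$ stays within a $(1+O(\epsilon'))$ factor of $C_j^*$---two facts that in the polynomial case were handed to us for free by unit slots and the clean ``$-\tfrac12$'' of Lemma~\ref{lem:integral}, but that here are coupled through the single identity $R_\ell=(1+\epsilon')L_\ell$. Pinning down the bottom of the grid (so that every completion time is charged at a positive endpoint) and checking that no super-polynomial object is ever written out---only the polynomially many pieces of the continuous schedule and the polynomially many LP variables---are the remaining, routine, points.
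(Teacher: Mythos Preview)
Your proposal is correct and follows the same route as the paper's appendix: replace unit slots by $O(\log_{1+\epsilon'} T)$ geometrically growing intervals, scale the capacity right-hand sides by the interval lengths, and rerun \textsf{Stretch} on the resulting LP solution read as a piecewise-constant-rate schedule. The differences are in the bookkeeping around the completion-time constraint. The paper charges a fraction finishing in interval $(\tau_{t-1},\tau_t]$ to $\tau_{t-1}+1$ (Proposition~\ref{prop:lpfeasibility_}), which buys a ``$-\tfrac12$'' in the analogue of Lemma~\ref{lem:integral} that exactly cancels the ``$+1$'' coming from the ceiling $\lceil C_j^*(\lambda)/\lambda\rceil$ when the stretched schedule is snapped back to unit slots; the paper thus lands on $2(1+\epsilon)$ with a single $\epsilon$. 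You instead charge to the left endpoint $L_\ell$, which is a valid but weaker lower bound, and compensate by declaring the output to be the continuous stretched schedule so that no ceiling appears; this is fine if continuous-time output is acceptable, but if the model of Section~\ref{sec:problem_formulation} (slotted time) is taken literally you would need the paper's ``$+1$'' trick to avoid losing an additive $1$ per coflow. Your explicit rounding of release times to interval endpoints is a clean way to guarantee that the stretched schedule never starts a flow early (a point the paper's appendix passes over), at the cost of one extra $(1+\epsilon')$ factor that you correctly absorb into the final $2+\epsilon$.
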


\section{Hardness of Approximation}
\label{sec:hardness}
\newcommand{\cC}{\mathbb{C}}
We claim the following theorem:

\begin{theorem}
  \label{thm:hardness}
  For the coflow scheduling problem, in both the single path and the free path model, it is NP-hard to obtain a $(2-\epsilon)$ approximation, for any $\epsilon > 0$.
\end{theorem}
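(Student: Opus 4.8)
The plan is to reduce from the \emph{concurrent open shop} scheduling problem, for which it is known to be NP-hard to obtain a $(2-\epsilon)$-approximation for the total weighted completion time objective (Sachdeva and Saket; and under the Unique Games Conjecture by Bansal and Khot). Recall an instance of concurrent open shop: $m$ machines, $n$ jobs, job $j$ having an integral processing requirement $p_{ij}\ge 0$ on machine $i$ and a weight $w_j$; each machine does one unit of work per unit time, operations of a job may run in parallel on distinct machines, and job $j$ completes once all of its operations finish; the goal is to minimize $\sum_j w_j C_j$. I will exhibit a gadget graph on which the coflow scheduling problem (in \emph{both} models simultaneously) is exactly this problem.

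First I would construct the graph $G$. For each machine $i$ create a ``bottleneck'' edge $e_i=(u_i,v_i)$ with capacity $c(e_i)=1$. For every job $j$ and every machine $i$ with $p_{ij}>0$, create a private source $s^i_j$ and private sink $t^i_j$ together with edges $(s^i_j,u_i)$ and $(v_i,t^i_j)$ of very large capacity. The coflow $F_j$ consists of the flows $f^i_j=(s^i_j,t^i_j,p_{ij})$ over all $i$ with $p_{ij}>0$, with weight $w_j$; in the single path model the designated path $p^i_j$ is the unique route $s^i_j, u_i, v_i, t^i_j$, and all release times are $0$. After this I would scale all demands and capacities by a common integer so that everything is integral; since the $p_{ij}$ are polynomially bounded, so is the resulting instance.

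Next I would show that the coflow instance and the concurrent open shop instance have the same optimal value, and in fact that feasible (preemptive) schedules correspond in a value-preserving way. The key observation is that $e_i$ is a cut edge separating $s^i_j$ from $t^i_j$ for every $j$, so even in the \emph{free path} model every unit of $f^i_j$ must traverse $e_i$; the large capacities on the auxiliary edges are never binding. Hence the only active constraint at each time slot is $\sum_{j: p_{ij}>0} x^i_j(t)\,p_{ij}\le 1$, which says precisely that machine $i$ performs at most one unit of work per unit time (allowing preemption), and a coflow completes exactly when all of its flows do, mirroring job completion. Thus a $(2-\epsilon)$-approximate coflow schedule, in either model, yields a $(2-\epsilon)$-approximate concurrent open shop schedule, contradicting NP-hardness; and the construction uses no release times, so the bound holds there too.

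The step I expect to require the most care is handling the continuous/preemptive nature of the coflow model and confirming that the free-path routing freedom gives no advantage. The routing issue is dispatched by the cut-edge structure above. For preemption, I would invoke the standard fact that preemption does not reduce the optimal weighted completion time of concurrent open shop: if a preemptive schedule has $C_1\le C_2\le\dots\le C_n$, then for each machine $i$ all operations of jobs $1,\dots,k$ are finished by $C_k$, so $C_k\ge \max_i \sum_{l\le k} p_{il}$, and the non-preemptive schedule that processes jobs in the order $1,\dots,n$ on every machine achieves $C'_k=\max_i\sum_{l\le k}p_{il}\le C_k$ for all $k$. Therefore the integral hardness transfers intact, and the theorem follows; the same argument applies verbatim to the free path model since the gadget admits only one routing.
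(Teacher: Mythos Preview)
Your proposal is correct and follows essentially the same reduction as the paper: build one unit-capacity bottleneck edge per machine so that the only binding constraint at each time step is the machine-$i$ work constraint, then appeal to the $(2-\epsilon)$ hardness of concurrent open shop and the fact that preemption does not help there. The paper's gadget is slightly leaner---it just places a single edge $(x_i,y_i)$ per machine and lets all coflows route their machine-$i$ flow directly from $x_i$ to $y_i$, without the auxiliary private sources and sinks you introduce---but your version is equally valid and the extra nodes do no harm; in particular your cut-edge observation is exactly what forces the free-path model to coincide with the single-path model here, just as the paper argues via ``there is only one path from $x_i$ to $y_i$.''
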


\begin{proof}
We prove it by a reduction from concurrent open-shop problem (proved NP-hard to approximate within a factor better than $(2-\epsilon)$~\cite{bansal2010inapproximability,sachdeva2013optimal}). The definition of concurrent open shop problem is as follows: there are $m$ machines and $n$ jobs, each job $j$ need to be processed on machine $i$ for $p_j^i$ time non-preemptively. We would like to minimize the total weighted completion time. Unlike the open shop problem, in the concurrent open shop problem a job can be processed on more than one machine at the same time.

Given a concurrent open-shop problem instance with $M$ machines, we construct an instance of the coflow scheduling problem as follows. For every machine $i$, we have two nodes $x_i$ and $y_i$, and an edge of unit bandwidth from $x_i$ to $y_i$. Notice the graph has $M$ different components, between each pair $(x_i, y_i)$, there is only one path from $x_i$ to $y_i$. Thus this construction works for both the single path model and the free path model. We will not distinguish the models in the following proof.

For a certain job $j$ with demands $\sigma_j^i$ in the concurrent open shop instance, we add a coflow $j$ with demand of $\sigma_j^i$ from $x_i$ to $y_i$. Weights are directly taken from the concurrent open shop problem instance. Suppose we get a solution for this coflow scheduling instance, we can get a solution of no larger cost for the concurrent open shop instance as follows. If we have a flow $f_j^i$ for job $j$ on edge $(x_i, y_i)$ of size $x_j^i(t)$ at time $t$, then we schedule a fraction of $x_j^i(t)$ for job $j$ on machine $i$ at time $t$. Suppose a flow $f_j^i$ is finished at time $C_j^i$ in the coflow scheduling problem, the corresponding concurrent open shop problem for job $j$ and machine $i$ is also finished at time $C_j^i$. Similarly, the finishing time $C_j$ of coflow $j$ and concurrent open shop job $j$ are the same. However, the solution we get is fractional, and might be preemptive (we might pause a job and resume it later).

Now we prove that we can modify this solution to get a non-preemptive integral solution without raising the total weighted completion time. For each machine $i$, consider all completion times $C_j^i$. Sort them in non-decreasing order $C_{l_1}^i, C_{l_2}^i, \dots, C_{l_J}^i$, and we can safely reschedule these demand in the order of $l_1, l_2, \dots, l_j$, and get new completion times $\cC_{l_1}^i, \dots, \cC_{l_j}^i$ while not raising any completion time. We know all demand of job $l_1$ on machine $i$ has been finished by $C_{l_1}^i$, so $\cC_{l_1}^i = d_{l_1, i} \leq C_{l_1}^i$, similarly all demands of job $l_1$ and $l_2$ have been finished by $C_{l_2}^i$, and $\cC_{l_2}^i = d_{l_1, i} + d_{l_2, i} \leq C_{l_2}^i$. We can continue and get $\cC_{l_j}^i \leq C_{l_j}^i, \forall j\in \{J\}$. Thus the total weighted completion time for this integral solution would be upper bounded by the cost for the coflow scheduling instance.
\begin{align*}
\sum_{j\in \{J\}}w_j\cdot \cC_j &= \sum_{j\in \{J\}}w_j\cdot \max_{i\in \{M\}}\cC_j^i\\
                                 &\leq \sum_{j\in \{J\}}w_j\cdot \max_{i\in \{M\}}C_j^i\\
                                 &= \sum_{j\in \{J\}}w_j\cdot C_j
\end{align*}

For the other direction, for a certain solution of a concurrent open-shop problem, if task $i$ of job $j$ is scheduled from time $t_1$ to time $t_2$, we make the flow $f_j^i$ take up all bandwidth of edge $(x_i, y_i)$ from time $t_1$ to time $t_2$. Then flow $f_j^i$ is finished the same time when task $i$ of job $j$ is finished. Since every task $i$ is finished the same time before and after reduction, completion times and the objective weighted completion time stays the same for the coflow scheduling problem.

In conclusion, for a solution $SOL$ of concurrent open-shop problem with weighted completion time $W$, we can construct a solution $SOL_{\text{coflow}}$ for coflow scheduling problem of the same weighted completion time $W$. For a solution $SOL'_{\text{coflow}}$ of coflow scheduling problem with weighted completion time $W'$, we can construct a solution $SOL'$ for the original concurrent open-shop problem, with cost at most $W'$. Since concurrent open-shop problem is NP-hard to get a $(2 - \epsilon)$ approximation, we know it is also NP-hard to approximate coflow scheduling problem to a factor of $(2 - \epsilon)$, for both single path model and free path model.
\end{proof}

\section{Experiments}
\label{sec:experiments}
We evaluated the \textsf{Stretch} Algorithm on 2 topologies and 4 benchmarks/industrial workloads. Experiments were run on a machine with dual \textsf{Intel(R) Xeon(R) CPU E5-2430}, and 64GB of RAM, and using \textsf{Gurobi}~\cite{gurobi} as the LP solver. We first discuss the experimental set up and then in Section~\ref{sec:baseline} discuss what evaluation we performed.

\textbf{WAN topology:} We consider the following graph topologies.
\begin{enumerate}
\item Swan~\cite{hong2013achieving}: Microsoft's inter-datacenter WAN with 5 datacenters and 7 inter-datacenter links. We calculate link bandwidth using the setup described by Hong et al.\cite{hong2013achieving}.
\item G-Scale~\cite{jain2013b4}: Google's inter-datacenter WAN with 12 datacenters and 19 inter-datacenter links.
\end{enumerate}

\textbf{Workloads:} We use the following mix of jobs from public benchmarks - TPC-DS~\cite{nambiar2006making}, TPC-H~\cite{poess2000new}, and BigBench~\cite{big-bench} - and from Facebook (FB) production traces~\cite{SWIM,coflow-benchmark}. We follow \cite{chowdhury2018terra} to set up the benchmarks: for a certain workload, jobs are randomly chosen and since they do not have a release time, we assign a release time similar to that in production traces. Each job lasts from a few minutes to dozens of minutes. Each benchmark experiment has 200 jobs. We randomly assign these jobs to nodes in the datacenter, and the demand will be between the corresponding nodes. Since weights are not available, we assign weights that are uniformly chosen from the interval between $1.0$ and $100.0$.

\subsection{Implementation Details}
\label{sec:impl_details}
In this subsection we discuss some details related to the implementation.

\noindent
\textbf{Time Index:} There is a trade-off in selecting the size of a time slot.
If the length of a time slot is shorter, we get more accurate answers, but need to solve a 
larger LP. On the contrary, if we make each time slot longer, the amount of computational resources need is greatly reduced, but the quality of the solution suffers. In all our experiments, we considered time slots of length $50$ seconds as this led to tractable LP relaxations.

\noindent
\textbf{Rounding:} Algorithm \textsf{Stretch} is meant for easy theoretical analysis, and is not a sophisticated rounding
method; we are not trying to schedule later flows in the slots that are idle. This can cause huge overhead in experiments. See Figure~\ref{pic:stretching} for an illustration. In our implementation, we  deal with this issue by moving the schedule of every time slot $t$ to an earlier idle slot $t'$ if for all flows scheduled at $t$, its release time is before $t'$. 

To address the random sampling of $\lambda$, we sample $20$ times from the distribution mentioned in Section~\ref{sec:algorithm} to get the expected weighted completion time for Algorithm Stretch, and denote it with ``Average $\lambda$''. We also measure the best solution obtained over these random choices (denoted by ``Best $\lambda$'').

\subsection{Baselines}
\label{sec:baseline}
\textbf{LP-based Heuristic:} In addition to algorithms with theoretically worst case guarantee, we also propose a heuristic that works well in practice. Recall in Section~\ref{sec:algorithm}, we mentioned that the LP solution itself is a valid schedule. We can use this solution as a heuristic, for both the single path and free path models. Note the weighted completion time for this LP solution is \emph{NOT} the same as the LP objective function, as explained in Section~\ref{sec:algorithm}. This implies that the solution from the heuristic can be arbitrarily bad in the worst case. In practice, however, this proves to be a very effective algorithm that can be quite close to the lower bound we get from LP.

\textbf{Jahanjou et al.\ (\emph{Single path model}):} Since path information is not available in the datasets, we randomly generate one for each flow. For a source sink pair $(s_j^i, t_j^i)$, we randomly select one of the shortest paths as the path for flow $f_j^i$. For this model, we compare our algorithm with the algorithm presented by Jahanjou et al.~\cite{jahanjou2017asymptotically}. Here is a brief description of their approach. First write an LP using geometric time intervals, then schedule each job according to the interval its $\alpha$ point (the time when $\alpha$ fraction of this job is finished) belongs to. A common reason for geometric time intervals is to avoid having a super-polynomial time horizon (a practical reason is to make the LP smaller), and a time series of $\{(1 + \epsilon)^i\}$ is chosen where $\epsilon$ is close to 0. The closer $\epsilon$ is to 0, the better the approximation ratio can be. However, in Jahanjou et al.'s algorithm, the rounding step has a dependency on $\epsilon$. To optimize the approximation ratio, $\epsilon$ is set to $0.5436$. Our algorithm, on the contrary, is time slot based, and can be turned into a geometric series of time intervals by losing a factor of $(1 + \epsilon)$. In experiments, we include both the case of $\epsilon = 0.2$ and the case of $\epsilon = 0.5436$ for completeness.

\textbf{Terra (\emph{Free path model}):} For the flow-based model, we are comparing to the offline algorithm in Terra~\cite{chowdhury2018terra}. This algorithm only works for the unweighted case. It calculates the time for each single coflow to finish individually, and then schedule with SRTF (shortest remaining time first). Instead of one large LP like all other algorithms compared here, this algorithm solves a large number of LPs, twice the number of coflow jobs. Terra can work with very fine grained time, to the order of milliseconds (and does not need time to be slotted). Since there is no previous work on weighted case, we compare the weighted case with the LP solution and our heuristic directly based on time indexed LP.

\subsection{Experimental Results}
\label{sec:experimental_results}
\begin{figure*}
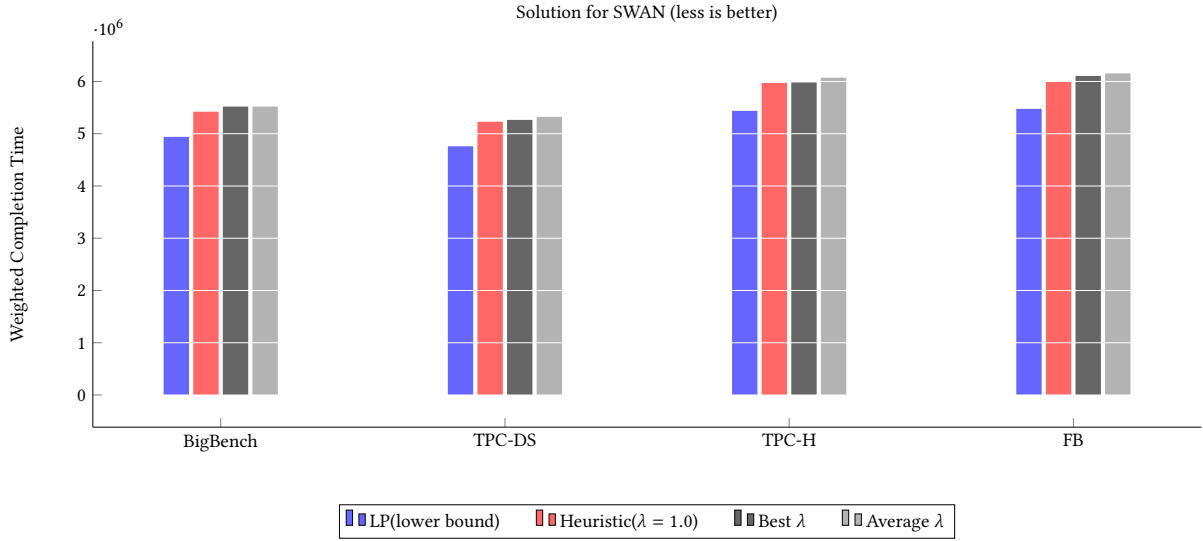

  \centering
  \scalebox{0.8}{%
    \includestandalone{img/experiment/flow_swan}
  }
  \caption{Free path model on SWAN, showing the performance bound of time indexed LP value, the performance of heuristic ($\lambda = 1$), best $\lambda$ among samples, and the expected value when $\lambda$ is chosen from the distribution mentioned in Section~\ref{sec:algorithm}.}
  \label{fig:experiment_flow_swan}
\end{figure*}
\begin{figure*}
  \centering
  \scalebox{0.8}{%
    \includestandalone{img/experiment/flow_gb4}
  }
  \caption{Free path model on G-Scale, showing the performance bound of time indexed LP value, the performance of heuristic ($\lambda = 1$), best $\lambda$ among samples, and the expected value when $\lambda$ is chosen from the distribution mentioned in Section~\ref{sec:algorithm}.}
  \label{fig:experiment_flow_gb4}
\end{figure*}
\begin{figure*}
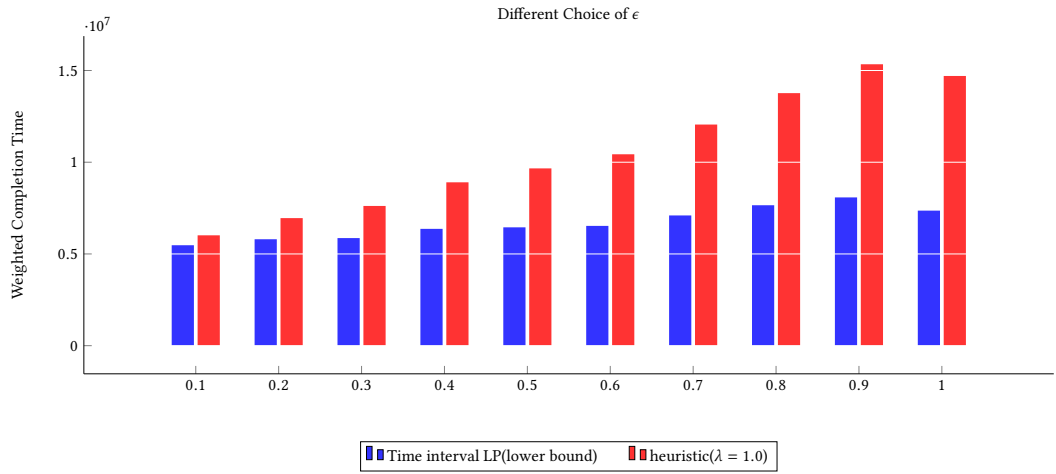

  \centering
  \scalebox{0.7}{%
    \includestandalone{img/experiment/path_swan_epsilon}
  }
  \caption{Free path model on SWAN for workload FB, the different choice of time interval $\epsilon$ may affect the performance bound of time interval LP value and the performance of heuristic ($\lambda = 1$).}
  \label{fig:experiment_epsilon}
\end{figure*}
\begin{figure*}
  \centering
  \scalebox{0.7}{%
    \includestandalone{img/experiment/path_swan}
  }
  \caption{Single path model on SWAN, showing the performance bound of time indexed and time interval LP value, the performance of heuristic ($\lambda = 1$), best $\lambda$ among samples, and the expected value when $\lambda$ is chosen from the distribution mentioned in Section~\ref{sec:algorithm}. Here we compare against algorithm by Jahanjou et al.\cite{jahanjou2017asymptotically}.}
  \label{fig:experiment_path_swan}
\end{figure*}
\begin{figure*}
  \centering
  \scalebox{0.7}{%
    \includestandalone{img/experiment/path_gb4}
  }
  \caption{Single path model on G-Scale, showing the performance bound of time indexed and time interval LP value, the performance of heuristic ($\lambda = 1$), best $\lambda$ among samples, and the expected value when $\lambda$ is chosen from the distribution mentioned in Section~\ref{sec:algorithm}. Here we compare against algorithm by Jahanjou et al.\cite{jahanjou2017asymptotically}.}
  \label{fig:experiment_path_gb4}
\end{figure*}
\begin{figure*}
  \centering
  \scalebox{0.7}{%
    \includestandalone{img/experiment/flow_swan_no_weight}
  }
  \caption{Free path model with no weight on graph SWAN, showing the performance bound of time indexed LP value, the performance of heuristic ($\lambda = 1$), best $\lambda$ among samples, and the expected value when $\lambda$ is chosen from the distribution mentioned in Section~\ref{sec:algorithm}. Here we compare against Terra\cite{chowdhury2018terra}}
  \label{fig:experiment_flow_swan_no_weight}
\end{figure*}
\begin{figure*}
  \centering
  \scalebox{0.7}{%
    \includestandalone{img/experiment/flow_gb4_no_weight}
  }
  \caption{Free path model with no weight on graph G-Scale, showing the performance bound of time indexed LP value, the performance of heuristic ($\lambda = 1$), best $\lambda$ among samples, and the expected value when $\lambda$ is chosen from the distribution mentioned in Section~\ref{sec:algorithm}. Here we compare against Terra\cite{chowdhury2018terra}}
  \label{fig:experiment_flow_gb4_no_weight}
\end{figure*}

\textbf{Impact of $\lambda$:} See Figure~\ref{fig:experiment_flow_swan} and Figure~\ref{fig:experiment_flow_gb4}. When $\lambda$ is $1.0$, we take the LP solution directly (this is exactly the LP-based heuristic). Across all experiments, this seems the best choice of $\lambda$. The best sampled $\lambda$ and the average case $\lambda$ are pretty close, indicating the performance does not change much across different $\lambda$.

\textbf{Impact of $\epsilon$:} To study the effect of the size of the time interval, we measure the LP objective and the schedule obtained by the LP-based heuristic as we vary $\epsilon$ in Figure~\ref{fig:experiment_epsilon}. As $\epsilon$ increases, the size of the linear program will drop, making it faster to solve. On the other hand, the quality of solution drops, as we will not start a job until the whole current interval is after its release time, and will not consider a job finished until the interval its completion time belongs to ends. Thus a proper selection of $\epsilon$ may depend on the available computational resources for solving the LP.

\textbf{Single Path Model:} 
Figures~\ref{fig:experiment_path_swan} and \ref{fig:experiment_path_gb4} compare the performance of our algorithms with that of Jahanjou et al.~\cite{jahanjou2017asymptotically} on all the benchmarks and topologies. Across all the experiments, we observe that our algorithms perform significantly better.

\textbf{Free Path Model:} See Figure~\ref{fig:experiment_flow_swan_no_weight} and Figure~\ref{fig:experiment_flow_gb4_no_weight} for comparisons with the algorithm in Terra\cite{chowdhury2018terra}. Since Terra only handles uniform coflow weights, we set all weights to be unit for these experiments.
Surprisingly, we observe that Terra performs slightly better than even the LP objective itself. This disparity arises as the LP relies on time slots of 50 seconds while Terra deals with time slots of much finer granularity. For the weighted case, we are not aware of previous work, and only compare to LP solution in Figures \ref{fig:experiment_flow_swan} and \ref{fig:experiment_flow_gb4}.

\section{Conclusion}
\label{sec:conclusion}

In this paper we developed an efficient approximation algorithm for the coflow scheduling problem in general graph topologies. This algorithm is shown to be practical and one that delivers extremely high quality solutions. The new insight was to write a time indexed LP formulation and to convert it using the idea of stretching the schedule. 

The next major challenge is developing {\em online} methods for coflow shceduling to minimize weighted flow time. Prior work \cite{khuller2018select} deals with the problem of minimizing weighted completion time by making use of offline approximation algorithms. However, the problem of minimizing weighted flow time is considerably more challenging. The technical difference is that flow time is defined as $C_j-r_j$ where $C_j$ is the completion time of a job, and $r_j$ the release time. Optimizing flow time non-preemptively even on a single machine (a different model) is a notoriously difficult problem with some recent progress \cite{batra2018constant,feige2018polynomial}.

%
\bibliographystyle{ACM-Reference-Format}
\balance
\bibliography{bibfile}


\begin{thebibliography}{31}


\ifx \showCODEN    \undefined \def \showCODEN     #1{\unskip}     \fi
\ifx \showDOI      \undefined \def \showDOI       #1{#1}\fi
\ifx \showISBNx    \undefined \def \showISBNx     #1{\unskip}     \fi
\ifx \showISBNxiii \undefined \def \showISBNxiii  #1{\unskip}     \fi
\ifx \showISSN     \undefined \def \showISSN      #1{\unskip}     \fi
\ifx \showLCCN     \undefined \def \showLCCN      #1{\unskip}     \fi
\ifx \shownote     \undefined \def \shownote      #1{#1}          \fi
\ifx \showarticletitle \undefined \def \showarticletitle #1{#1}   \fi
\ifx \showURL      \undefined \def \showURL       {\relax}        \fi
\providecommand\bibfield[2]{#2}
\providecommand\bibinfo[2]{#2}
\providecommand\natexlab[1]{#1}
\providecommand\showeprint[2][]{arXiv:#2}

\bibitem[\protect\citeauthoryear{Agarwal, Rajakrishnan, Narayan, Agarwal,
  Shmoys, and Vahdat}{Agarwal et~al\mbox{.}}{2018}]%
        {agarwal2018sincronia}
\bibfield{author}{\bibinfo{person}{Saksham Agarwal}, \bibinfo{person}{Shijin
  Rajakrishnan}, \bibinfo{person}{Akshay Narayan}, \bibinfo{person}{Rachit
  Agarwal}, \bibinfo{person}{David Shmoys}, {and} \bibinfo{person}{Amin
  Vahdat}.} \bibinfo{year}{2018}\natexlab{}.
\newblock \showarticletitle{Sincronia: near-optimal network design for
  coflows}. In \bibinfo{booktitle}{\emph{SIGCOMM}}. ACM,
  \bibinfo{pages}{16--29}.
\newblock


\bibitem[\protect\citeauthoryear{Ahmadi, Khuller, Purohit, and Yang}{Ahmadi
  et~al\mbox{.}}{2017}]%
        {ahmadi2017scheduling}
\bibfield{author}{\bibinfo{person}{Saba Ahmadi}, \bibinfo{person}{Samir
  Khuller}, \bibinfo{person}{Manish Purohit}, {and} \bibinfo{person}{Sheng
  Yang}.} \bibinfo{year}{2017}\natexlab{}.
\newblock \showarticletitle{On scheduling coflows}. In
  \bibinfo{booktitle}{\emph{IPCO}}. Springer, \bibinfo{pages}{13--24}.
\newblock


\bibitem[\protect\citeauthoryear{Bansal and Khot}{Bansal and Khot}{2010}]%
        {bansal2010inapproximability}
\bibfield{author}{\bibinfo{person}{Nikhil Bansal} {and}
  \bibinfo{person}{Subhash Khot}.} \bibinfo{year}{2010}\natexlab{}.
\newblock \showarticletitle{Inapproximability of hypergraph vertex cover and
  applications to scheduling problems}. In \bibinfo{booktitle}{\emph{ICALP}}.
  Springer, \bibinfo{pages}{250--261}.
\newblock


\bibitem[\protect\citeauthoryear{Batra, Kumar, and Garg}{Batra
  et~al\mbox{.}}{2018}]%
        {batra2018constant}
\bibfield{author}{\bibinfo{person}{Jatin Batra}, \bibinfo{person}{Amit Kumar},
  {and} \bibinfo{person}{Naveen Garg}.} \bibinfo{year}{2018}\natexlab{}.
\newblock \showarticletitle{Constant Factor Approximation Algorithm for
  Weighted Flow Time on a Single Machine in Pseudo-polynomial time}. In
  \bibinfo{booktitle}{\emph{FOCS}}. IEEE.
\newblock


\bibitem[\protect\citeauthoryear{Chowdhury}{Chowdhury}{2015}]%
        {coflow-benchmark}
\bibfield{author}{\bibinfo{person}{Mosharaf Chowdhury}.}
  \bibinfo{year}{2015}\natexlab{}.
\newblock \bibinfo{title}{Coflow Benchmark Based on Facebook Traces}.
\newblock
\newblock
\urldef\tempurl%
\url{https://github.com/coflow/coflow-benchmark}
\showURL{%
Retrieved April 22, 2019 from \tempurl}


\bibitem[\protect\citeauthoryear{Chowdhury and Stoica}{Chowdhury and
  Stoica}{2012}]%
        {chowdhury2012coflow}
\bibfield{author}{\bibinfo{person}{Mosharaf Chowdhury} {and}
  \bibinfo{person}{Ion Stoica}.} \bibinfo{year}{2012}\natexlab{}.
\newblock \showarticletitle{Coflow: A networking abstraction for cluster
  applications}. In \bibinfo{booktitle}{\emph{HotNets}}. ACM,
  \bibinfo{pages}{31--36}.
\newblock


\bibitem[\protect\citeauthoryear{Chowdhury, Zhong, and Stoica}{Chowdhury
  et~al\mbox{.}}{2014}]%
        {chowdhury2014efficient}
\bibfield{author}{\bibinfo{person}{Mosharaf Chowdhury}, \bibinfo{person}{Yuan
  Zhong}, {and} \bibinfo{person}{Ion Stoica}.} \bibinfo{year}{2014}\natexlab{}.
\newblock \showarticletitle{Efficient Coflow Scheduling with Varys}. In
  \bibinfo{booktitle}{\emph{SIGCOMM}}. \bibinfo{publisher}{ACM},
  \bibinfo{address}{New York, NY, USA}, \bibinfo{pages}{443--454}.
\newblock
\showISBNx{978-1-4503-2836-4}
\urldef\tempurl%
\url{https://doi.org/10.1145/2619239.2626315}
\showDOI{\tempurl}


\bibitem[\protect\citeauthoryear{Chuzhoy, Guruswami, Khanna, and
  Talwar}{Chuzhoy et~al\mbox{.}}{2007}]%
        {chuzhoy2007hardness}
\bibfield{author}{\bibinfo{person}{Julia Chuzhoy}, \bibinfo{person}{Venkatesan
  Guruswami}, \bibinfo{person}{Sanjeev Khanna}, {and} \bibinfo{person}{Kunal
  Talwar}.} \bibinfo{year}{2007}\natexlab{}.
\newblock \showarticletitle{Hardness of routing with congestion in directed
  graphs}. In \bibinfo{booktitle}{\emph{STOC}}. ACM, \bibinfo{pages}{165--178}.
\newblock


\bibitem[\protect\citeauthoryear{Facebook}{Facebook}{2014}]%
        {SWIM}
\bibfield{author}{\bibinfo{person}{Facebook}.} \bibinfo{year}{2014}\natexlab{}.
\newblock \bibinfo{title}{Statistical Workload Injector for MapReduce (SWIM)}.
\newblock
\newblock
\urldef\tempurl%
\url{https://github.com/SWIMProjectUCB/SWIM}
\showURL{%
Retrieved April 22, 2019 from \tempurl}


\bibitem[\protect\citeauthoryear{Feige, Kulkarni, and Li}{Feige
  et~al\mbox{.}}{2019}]%
        {feige2018polynomial}
\bibfield{author}{\bibinfo{person}{Uriel Feige}, \bibinfo{person}{Janardhan
  Kulkarni}, {and} \bibinfo{person}{Shi Li}.} \bibinfo{year}{2019}\natexlab{}.
\newblock \showarticletitle{A Polynomial Time Constant Approximation For
  Minimizing Total Weighted Flow-time}.
\newblock  (\bibinfo{year}{2019}).
\newblock


\bibitem[\protect\citeauthoryear{{\relax Gurobi Optimization, LLC}}{{\relax
  Gurobi Optimization, LLC}}{2018}]%
        {gurobi}
\bibfield{author}{\bibinfo{person}{{\relax Gurobi Optimization, LLC}}.}
  \bibinfo{year}{2018}\natexlab{}.
\newblock \bibinfo{title}{Gurobi Optimizer Reference Manual}.
\newblock
\newblock
\urldef\tempurl%
\url{http://www.gurobi.com}
\showURL{%
\tempurl}


\bibitem[\protect\citeauthoryear{Hong, Kandula, Mahajan, Zhang, Gill, Nanduri,
  and Wattenhofer}{Hong et~al\mbox{.}}{2013}]%
        {hong2013achieving}
\bibfield{author}{\bibinfo{person}{Chi-Yao Hong}, \bibinfo{person}{Srikanth
  Kandula}, \bibinfo{person}{Ratul Mahajan}, \bibinfo{person}{Ming Zhang},
  \bibinfo{person}{Vijay Gill}, \bibinfo{person}{Mohan Nanduri}, {and}
  \bibinfo{person}{Roger Wattenhofer}.} \bibinfo{year}{2013}\natexlab{}.
\newblock \showarticletitle{Achieving high utilization with software-driven
  WAN}. In \bibinfo{booktitle}{\emph{SIGCOMM}}, Vol.~\bibinfo{volume}{43}. ACM,
  \bibinfo{pages}{15--26}.
\newblock


\bibitem[\protect\citeauthoryear{Im, Sviridenko, and Van Der~Zwaan}{Im
  et~al\mbox{.}}{2014}]%
        {im2014preemptive}
\bibfield{author}{\bibinfo{person}{Sungjin Im}, \bibinfo{person}{Maxim
  Sviridenko}, {and} \bibinfo{person}{Ruben Van Der~Zwaan}.}
  \bibinfo{year}{2014}\natexlab{}.
\newblock \showarticletitle{Preemptive and non-preemptive generalized min sum
  set cover}.
\newblock \bibinfo{journal}{\emph{Mathematical Programming}}
  \bibinfo{volume}{145}, \bibinfo{number}{1-2} (\bibinfo{year}{2014}),
  \bibinfo{pages}{377--401}.
\newblock


\bibitem[\protect\citeauthoryear{{Intel Hadoop}}{{Intel Hadoop}}{2016}]%
        {big-bench}
\bibfield{author}{\bibinfo{person}{{Intel Hadoop}}.}
  \bibinfo{year}{2016}\natexlab{}.
\newblock \bibinfo{title}{Big Data Benchmark for Big Bench}.
\newblock
\newblock
\urldef\tempurl%
\url{https://github.com/intel-hadoop/Big-Data-Benchmark-for-Big-Bench}
\showURL{%
Retrieved April 22, 2019 from \tempurl}


\bibitem[\protect\citeauthoryear{Jahanjou, Kantor, and Rajaraman}{Jahanjou
  et~al\mbox{.}}{2017}]%
        {jahanjou2017asymptotically}
\bibfield{author}{\bibinfo{person}{Hamidreza Jahanjou}, \bibinfo{person}{Erez
  Kantor}, {and} \bibinfo{person}{Rajmohan Rajaraman}.}
  \bibinfo{year}{2017}\natexlab{}.
\newblock \showarticletitle{Asymptotically Optimal Approximation Algorithms for
  Coflow Scheduling}. In \bibinfo{booktitle}{\emph{SPAA}}. ACM,
  \bibinfo{pages}{45--54}.
\newblock


\bibitem[\protect\citeauthoryear{Jain, Kumar, Mandal, Ong, Poutievski, Singh,
  Venkata, Wanderer, Zhou, Zhu, et~al\mbox{.}}{Jain et~al\mbox{.}}{2013}]%
        {jain2013b4}
\bibfield{author}{\bibinfo{person}{Sushant Jain}, \bibinfo{person}{Alok Kumar},
  \bibinfo{person}{Subhasree Mandal}, \bibinfo{person}{Joon Ong},
  \bibinfo{person}{Leon Poutievski}, \bibinfo{person}{Arjun Singh},
  \bibinfo{person}{Subbaiah Venkata}, \bibinfo{person}{Jim Wanderer},
  \bibinfo{person}{Junlan Zhou}, \bibinfo{person}{Min Zhu}, {et~al\mbox{.}}}
  \bibinfo{year}{2013}\natexlab{}.
\newblock \showarticletitle{B4: Experience with a globally-deployed software
  defined WAN}. In \bibinfo{booktitle}{\emph{SIGCOMM}},
  Vol.~\bibinfo{volume}{43}. ACM, \bibinfo{pages}{3--14}.
\newblock


\bibitem[\protect\citeauthoryear{Khuller, Li, Sturmfels, Sun, and
  Venkat}{Khuller et~al\mbox{.}}{2018}]%
        {khuller2018select}
\bibfield{author}{\bibinfo{person}{Samir Khuller}, \bibinfo{person}{Jingling
  Li}, \bibinfo{person}{Pascal Sturmfels}, \bibinfo{person}{Kevin Sun}, {and}
  \bibinfo{person}{Prayaag Venkat}.} \bibinfo{year}{2018}\natexlab{}.
\newblock \showarticletitle{Select and permute: An improved online framework
  for scheduling to minimize weighted completion time}. In
  \bibinfo{booktitle}{\emph{LATIN}}. Springer, \bibinfo{pages}{669--682}.
\newblock


\bibitem[\protect\citeauthoryear{Khuller and Purohit}{Khuller and
  Purohit}{2016}]%
        {Khuller:SPAA2016}
\bibfield{author}{\bibinfo{person}{Samir Khuller} {and} \bibinfo{person}{Manish
  Purohit}.} \bibinfo{year}{2016}\natexlab{}.
\newblock \showarticletitle{Brief Announcement: Improved Approximation
  Algorithms for Scheduling Co-Flows}. In \bibinfo{booktitle}{\emph{SPAA}}.
  \bibinfo{publisher}{ACM}, \bibinfo{address}{New York, NY, USA},
  \bibinfo{pages}{239--240}.
\newblock
\showISBNx{978-1-4503-4210-0}
\urldef\tempurl%
\url{https://doi.org/10.1145/2935764.2935809}
\showDOI{\tempurl}


\bibitem[\protect\citeauthoryear{Li, Jiang, Tan, Zhang, Chen, Zhou, and Lau}{Li
  et~al\mbox{.}}{2016}]%
        {li2016efficient}
\bibfield{author}{\bibinfo{person}{Yupeng Li}, \bibinfo{person}{Shaofeng H-C
  Jiang}, \bibinfo{person}{Haisheng Tan}, \bibinfo{person}{Chenzi Zhang},
  \bibinfo{person}{Guihai Chen}, \bibinfo{person}{Jipeng Zhou}, {and}
  \bibinfo{person}{Francis~CM Lau}.} \bibinfo{year}{2016}\natexlab{}.
\newblock \showarticletitle{Efficient online coflow routing and scheduling}. In
  \bibinfo{booktitle}{\emph{MobiHoc}}. ACM, \bibinfo{pages}{161--170}.
\newblock


\bibitem[\protect\citeauthoryear{Luo, Yu, Zhao, Wang, Yu, and Li}{Luo
  et~al\mbox{.}}{2016}]%
        {luo2016towards}
\bibfield{author}{\bibinfo{person}{Shouxi Luo}, \bibinfo{person}{Hongfang Yu},
  \bibinfo{person}{Yangming Zhao}, \bibinfo{person}{Sheng Wang},
  \bibinfo{person}{Shui Yu}, {and} \bibinfo{person}{Lemin Li}.}
  \bibinfo{year}{2016}\natexlab{}.
\newblock \showarticletitle{Towards practical and near-optimal coflow
  scheduling for data center networks}.
\newblock \bibinfo{journal}{\emph{IEEE Transactions on Parallel and Distributed
  Systems}} \bibinfo{volume}{27}, \bibinfo{number}{11} (\bibinfo{year}{2016}),
  \bibinfo{pages}{3366--3380}.
\newblock


\bibitem[\protect\citeauthoryear{Mao, Aggarwal, and Chiang}{Mao
  et~al\mbox{.}}{2018}]%
        {mao2018stochastic}
\bibfield{author}{\bibinfo{person}{Ruijiu Mao}, \bibinfo{person}{Vaneet
  Aggarwal}, {and} \bibinfo{person}{Mung Chiang}.}
  \bibinfo{year}{2018}\natexlab{}.
\newblock \showarticletitle{Stochastic non-preemptive co-flow scheduling with
  time-indexed relaxation}.
\newblock \bibinfo{journal}{\emph{INFOCOM}} (\bibinfo{year}{2018}).
\newblock


\bibitem[\protect\citeauthoryear{Nambiar and Poess}{Nambiar and Poess}{2006}]%
        {nambiar2006making}
\bibfield{author}{\bibinfo{person}{Raghunath~Othayoth Nambiar} {and}
  \bibinfo{person}{Meikel Poess}.} \bibinfo{year}{2006}\natexlab{}.
\newblock \showarticletitle{The making of TPC-DS}. In
  \bibinfo{booktitle}{\emph{Proceedings of the 32nd international conference on
  Very large data bases}}. VLDB Endowment, \bibinfo{pages}{1049--1058}.
\newblock


\bibitem[\protect\citeauthoryear{Poess and Floyd}{Poess and Floyd}{2000}]%
        {poess2000new}
\bibfield{author}{\bibinfo{person}{Meikel Poess} {and} \bibinfo{person}{Chris
  Floyd}.} \bibinfo{year}{2000}\natexlab{}.
\newblock \showarticletitle{New TPC benchmarks for decision support and web
  commerce}.
\newblock \bibinfo{journal}{\emph{ACM Sigmod Record}} \bibinfo{volume}{29},
  \bibinfo{number}{4} (\bibinfo{year}{2000}), \bibinfo{pages}{64--71}.
\newblock


\bibitem[\protect\citeauthoryear{Qiu, Stein, and Zhong}{Qiu
  et~al\mbox{.}}{2015}]%
        {qiu2015minimizing}
\bibfield{author}{\bibinfo{person}{Zhen Qiu}, \bibinfo{person}{Cliff Stein},
  {and} \bibinfo{person}{Yuan Zhong}.} \bibinfo{year}{2015}\natexlab{}.
\newblock \showarticletitle{Minimizing the Total Weighted Completion Time of
  Coflows in Datacenter Networks}. In \bibinfo{booktitle}{\emph{SPAA}}.
  \bibinfo{publisher}{ACM}, \bibinfo{address}{New York, NY, USA},
  \bibinfo{pages}{294--303}.
\newblock
\showISBNx{978-1-4503-3588-1}
\urldef\tempurl%
\url{https://doi.org/10.1145/2755573.2755592}
\showDOI{\tempurl}


\bibitem[\protect\citeauthoryear{Queyranne and Sviridenko}{Queyranne and
  Sviridenko}{2002}]%
        {queyranne20022+}
\bibfield{author}{\bibinfo{person}{Maurice Queyranne} {and}
  \bibinfo{person}{Maxim Sviridenko}.} \bibinfo{year}{2002}\natexlab{}.
\newblock \showarticletitle{A (2+ $\varepsilon$)-approximation algorithm for
  the generalized preemptive open shop problem with minsum objective}.
\newblock \bibinfo{journal}{\emph{Journal of Algorithms}} \bibinfo{volume}{45},
  \bibinfo{number}{2} (\bibinfo{year}{2002}), \bibinfo{pages}{202--212}.
\newblock


\bibitem[\protect\citeauthoryear{Sachdeva and Saket}{Sachdeva and
  Saket}{2013}]%
        {sachdeva2013optimal}
\bibfield{author}{\bibinfo{person}{Sushant Sachdeva} {and}
  \bibinfo{person}{Rishi Saket}.} \bibinfo{year}{2013}\natexlab{}.
\newblock \showarticletitle{Optimal inapproximability for scheduling problems
  via structural hardness for hypergraph vertex cover}. In
  \bibinfo{booktitle}{\emph{CCC}}. IEEE, \bibinfo{pages}{219--229}.
\newblock


\bibitem[\protect\citeauthoryear{Schulz and Skutella}{Schulz and
  Skutella}{1997}]%
        {schulz1997random}
\bibfield{author}{\bibinfo{person}{Andreas~S Schulz} {and}
  \bibinfo{person}{Martin Skutella}.} \bibinfo{year}{1997}\natexlab{}.
\newblock \showarticletitle{Random-based scheduling new approximations and LP
  lower bounds}. In \bibinfo{booktitle}{\emph{RANDOM}}. Springer,
  \bibinfo{pages}{119--133}.
\newblock


\bibitem[\protect\citeauthoryear{Shafiee and Ghaderi}{Shafiee and
  Ghaderi}{2018}]%
        {shafiee2018improved}
\bibfield{author}{\bibinfo{person}{Mehrnoosh Shafiee} {and}
  \bibinfo{person}{Javad Ghaderi}.} \bibinfo{year}{2018}\natexlab{}.
\newblock \showarticletitle{An improved bound for minimizing the total weighted
  completion time of coflows in datacenters}.
\newblock \bibinfo{journal}{\emph{IEEE/ACM Transactions on Networking (TON)}}
  \bibinfo{volume}{26}, \bibinfo{number}{4} (\bibinfo{year}{2018}),
  \bibinfo{pages}{1674--1687}.
\newblock


\bibitem[\protect\citeauthoryear{You and Chowdhury}{You and Chowdhury}{2019}]%
        {chowdhury2018terra}
\bibfield{author}{\bibinfo{person}{Jie You} {and} \bibinfo{person}{Mosharaf
  Chowdhury}.} \bibinfo{year}{2019}\natexlab{}.
\newblock \bibinfo{title}{Terra: Scalable Cross-Layer GDA Optimizations}.
\newblock \bibinfo{howpublished}{\url{https://arxiv.org/abs/1904.08480}}.
\newblock
\showeprint{arXiv:1904.08480}


\bibitem[\protect\citeauthoryear{Yu, Xue, Zhang, and Tang}{Yu
  et~al\mbox{.}}{2016}]%
        {yu2016non}
\bibfield{author}{\bibinfo{person}{Ruozhou Yu}, \bibinfo{person}{Guoliang Xue},
  \bibinfo{person}{Xiang Zhang}, {and} \bibinfo{person}{Jian Tang}.}
  \bibinfo{year}{2016}\natexlab{}.
\newblock \showarticletitle{Non-preemptive Coflow Scheduling and Routing}. In
  \bibinfo{booktitle}{\emph{Global Communications Conference (GLOBECOM), 2016
  IEEE}}. IEEE, \bibinfo{pages}{1--6}.
\newblock


\bibitem[\protect\citeauthoryear{Zhao, Chen, Bai, Yu, Tian, Geng, Zhang, Li,
  and Wang}{Zhao et~al\mbox{.}}{2015}]%
        {zhao2015rapier}
\bibfield{author}{\bibinfo{person}{Yangming Zhao}, \bibinfo{person}{Kai Chen},
  \bibinfo{person}{Wei Bai}, \bibinfo{person}{Minlan Yu}, \bibinfo{person}{Chen
  Tian}, \bibinfo{person}{Yanhui Geng}, \bibinfo{person}{Yiming Zhang},
  \bibinfo{person}{Dan Li}, {and} \bibinfo{person}{Sheng Wang}.}
  \bibinfo{year}{2015}\natexlab{}.
\newblock \showarticletitle{RAPIER: Integrating routing and scheduling for
  coflow-aware data center networks}. In \bibinfo{booktitle}{\emph{INFOCOM}}.
  IEEE, \bibinfo{pages}{424--432}.
\newblock


\end{thebibliography}

\ifjournal{%
\appendix
\section{Sketch of generalization to super-polynomial time span}
\label{sec:large_jobs}

Geometric series time interval is defined as follows. For an $\epsilon > 0$, let $\tau_0 = 0, \tau_1 = 1, \cdots,  \tau_k = (1 + \epsilon)^{k - 1},\cdots $. We define the $k$-th interval as $l_k = [\tau_{k-1}, \tau_k]$. Since $T$ is at most the sum of all processing time and all release time, we know the number of intervals $\mathbb{T} = 1 + \lceil \log_{1 + \epsilon}T \rceil$ is polynomial.

We change the LP as follows. We abuse notation a bit and allow $\mathbb{T}$ to represent the set $\{1, 2, \cdots, \mathbb{T}\}$ when there is no confusion. We replace all accurance of $T$ with $\mathbb{T}$ in Section~\ref{sec:lp_formulation}, modify Equation~(\ref{lp1:release time}) and Equation~(\ref{lp1:completion}) to accommodate for release time, and get the following linear program.

\begin{align}
  &\text{Minimize} \quad \sum_jw_jC_j\nonumber\\
  &\text{Subject to} \nonumber\\
  &\sum_tx_{j}^i(t) = 1 && \forall j \in [n], \forall i \in [n_j]\label{lp2:job finish}\\
&X_j(t) \leq \sum_{\ell = 1}^t x_j^i(\ell) && \forall j \in [n], \forall i \in [n_j], \forall t \in \mathbb{T} \label{lp2:cumulative}\\
&C_j \geq 1 + \sum_t (\tau_{t} - \tau_{t - 1})(1 - X_j(t)) && \forall j\in [n] \label{lp2:completion}\\
&r_j^i \geq \tau_t \Rightarrow x_{j}^i(t) = 0 && \forall j \in [n],\forall i \in [n_j], \forall t\in \mathbb{T}\label{lp2:release time}\\
&x_{j}^i(t) \geq 0 && \forall j \in [n],\forall i \in [n_j], \forall t\in \mathbb{T} \label{lp2:positive}
\end{align}

For the model specific part of linear program, we only need to change the capacity constraints: replace Equation~(\ref{eq:path_not_given_constraint}) for single path model to get
\begin{align}
  \sum_{p_j^i\ni e} x^i_j(t)\cdot \sigma^i_j &\leq (\tau_t - \tau_{t-1})c(e), && \forall e\in E, \forall t\in \mathbb{T}\label{eq:path_not_given_constraint_}
\end{align}
and Equation~(\ref{eq:capacity}) for free path model to get
\begin{align}
  &\sum_{e \in \delta_{out}(s_j^i)}x^i_j(t, e) = x^i_j(t), &&\forall j \in [n],\forall i \in [n_j], \forall t\in \mathbb{T}\label{eq:source_}\\
  &\sum_{e \in \delta_{in}(t_j^i)}x^i_j(t, e) = x^i_j(t), &&\forall j \in [n],\forall i \in [n_j], \forall t\in \mathbb{T}\label{eq:sink_}\\
  &\sum_{e \in \delta_{in}(v)}x^i_j(t, e) = \sum_{e = \delta_{out}(v)}x^i_j(t, e), &&\forall j \in [n],\forall i \in [n_j], \forall t\in \mathbb{T}, \nonumber\\
  &&&\ \forall v\in V\backslash \{s_i, t_i\}\label{eq:conservation_}\\
  &\sum_{j \in [n],i \in [n_j]}\hspace{-0.5cm} x^i_j(t, e)\cdot \sigma^i_j \leq (\tau_t - \tau_{t-1})c(e), &&\forall t\in \mathbb{T}, \forall e\in E\label{eq:capacity_}
\end{align}

Similar to Proposition~\ref{prop:lpfeasibility}, we prove Constraint~(\ref{lp2:completion}) is a good lower bound.

\begin{proposition}
\label{prop:lpfeasibility_}
The completion time of a coflow $F_j$ can be lower bounded by $C_j \geq 1 + \sum_t (\tau_{t} - \tau_{t - 1})(1 - X_j(t))$ where $X_j(t) \in [0,1]$ denotes the fraction of coflow $F_j$ that has been completed by (the end of) time interval $[\tau_{t - 1}, \tau_t]$.
\end{proposition}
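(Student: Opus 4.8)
The plan is to prove Proposition~\ref{prop:lpfeasibility_} as the geometric-interval analog of Proposition~\ref{prop:lpfeasibility}, reusing the same two-step template: first establish a fractional completion-time lower bound using the right endpoints $\tau_t$ of the intervals, then rearrange it by a telescoping-and-swap manipulation into the stated form. As in the unit-interval case, I would treat $X_j$ abstractly as a nondecreasing quantity with $X_j(0)=0$ and $X_j(\mathbb{T})=1$, and write $x_j(t)=X_j(t)-X_j(t-1)$ for the fraction of $F_j$ completed during the $t$-th interval $[\tau_{t-1},\tau_t]$, so that $\sum_t x_j(t)=1$.

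The first step is to justify the starting inequality $C_j \ge \sum_{t=1}^{\mathbb{T}} \tau_t\, x_j(t)$. Under the geometric-interval LP, a coflow whose last scheduled fraction falls in interval $t$ is deemed complete only at the end of that interval, i.e.\ at time $\tau_t$; hence the true completion time equals the right endpoint of the last occupied interval, and in particular $C_j \ge \tau_t$ for every interval $t$ with $x_j(t)>0$. Multiplying each such inequality by $x_j(t)$ and summing, using $\sum_t x_j(t)=1$, gives $C_j = C_j\sum_t x_j(t) \ge \sum_t \tau_t\, x_j(t)$, exactly the ``fractional completion time'' bound that opened the proof of Proposition~\ref{prop:lpfeasibility}, now with the nonuniform right endpoints $\tau_t$ in place of the integers $t$.

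The second step is the rearrangement. I would expand each $\tau_t$ as the telescoping sum of interval widths $\tau_t=\sum_{k=1}^{t}(\tau_k-\tau_{k-1})$ (valid since $\tau_0=0$), interchange the order of summation, and recognize the inner sum as a tail of $x_j$:
\begin{align*}
C_j \;\ge\; \sum_{t=1}^{\mathbb{T}} \tau_t\, x_j(t) = \sum_{t=1}^{\mathbb{T}} x_j(t)\sum_{k=1}^{t}(\tau_k-\tau_{k-1}) = \sum_{k=1}^{\mathbb{T}} (\tau_k-\tau_{k-1})\sum_{t\ge k} x_j(t).
\end{align*}
Substituting $\sum_{t\ge k} x_j(t)=1-X_j(k-1)$ and then peeling off the first interval $[\tau_0,\tau_1]=[0,1]$, whose unit width together with $X_j(0)=0$ contributes the additive constant $1$, collects the remaining geometric interval widths as the weights on the terms $1-X_j(\cdot)$, yielding the claimed bound $C_j \ge 1+\sum_t(\tau_t-\tau_{t-1})(1-X_j(t))$. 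Setting all widths equal to $1$ recovers Proposition~\ref{prop:lpfeasibility} verbatim, which is a useful sanity check.

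The routine part is the algebra, but the step I expect to require the most care is the index bookkeeping in this summation-by-parts once the widths are nonuniform: in the unit case every width equals $1$ so the shift between ``fraction completed by $\tau_{t-1}$'' and ``by $\tau_t$'' is invisible, whereas here one must track precisely which interval's width multiplies each $1-X_j(\cdot)$ term and verify that the boundary conditions $X_j(0)=0$ and $X_j(\mathbb{T})=1$ are what produce the leading $1$ and eliminate the top endpoint. Getting this off-by-one bookkeeping right, so that the unit first interval cleanly supplies the additive $1$ while the geometric widths land on the correct summands, is the main technical obstacle; everything else parallels the already-established Proposition~\ref{prop:lpfeasibility}.
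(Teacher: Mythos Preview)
Your rearrangement does not land on the stated bound. Carrying your computation through, you get
\[
C_j \;\ge\; \sum_{k=1}^{\mathbb{T}}(\tau_k-\tau_{k-1})\bigl(1-X_j(k-1)\bigr)
\;=\; 1 + \sum_{k=2}^{\mathbb{T}}(\tau_k-\tau_{k-1})\bigl(1-X_j(k-1)\bigr)
\;=\; 1 + \sum_{t=1}^{\mathbb{T}-1}(\tau_{t+1}-\tau_t)\bigl(1-X_j(t)\bigr),
\]
so the weight attached to $1-X_j(t)$ is the width of the \emph{next} interval, $(\tau_{t+1}-\tau_t)$, not the width $(\tau_t-\tau_{t-1})$ that the proposition requires. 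With unit widths the shift is invisible, which is exactly why your sanity check ``set all widths to $1$'' passes; with geometric widths the two expressions genuinely differ (for instance at $t=1$ your weight is $\tau_2-\tau_1=\epsilon$ while the proposition's weight is $\tau_1-\tau_0=1$), and neither one dominates the other. So the very off-by-one issue you flagged as the main obstacle is in fact not resolved by your outline.

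The paper's proof avoids this by changing the starting inequality: rather than $C_j\ge \tau_t$ whenever $x_j(t)>0$, it uses that a job completing in $(\tau_{t-1},\tau_t]$ has finishing time at least $\tau_{t-1}+1$, giving $C_j\ge \sum_t(1+\tau_{t-1})x_j(t)$. The additive $1$ then comes directly from $\sum_t x_j(t)=1$, and telescoping $\tau_{t-1}=\sum_{\rho=1}^{t-1}(\tau_\rho-\tau_{\rho-1})$ (note: the \emph{left} endpoint) places the weight $(\tau_\rho-\tau_{\rho-1})$ on $1-X_j(\rho)$ with matching indices. Your justification for the $\tau_t$ starting point (``deemed complete only at the end of that interval'') is also not what the relaxation needs: the constraint must lower-bound the true completion time of any feasible schedule, and a schedule can finish strictly before $\tau_t$ while still having $x_j(t)>0$. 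Switching to the left-endpoint-plus-one form fixes both the validity of the starting inequality and the index alignment.
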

\begin{proof}
  If a job completes in the interval $(\tau_{t-1}, \tau_t]$, then its finishing time is at least $\tau_{t-1} + 1$.
\begin{align*}
  C_j &\geq \sum_{t = 1}^{\mathbb{T}} (1 + \tau_{t-1}) \cdot x_j(t) = \sum_{t=1}^{\mathbb{T}}x_j(t) + \sum_{t = 1}^{\mathbb{T}}x_j(t)\sum_{\rho=1}^{t - 1}(\tau_{\rho} - \tau_{\rho-1})\\
      &= 1 + \sum_{\rho=1}^{\mathbb{T} - 1}(\tau_{\rho} - \tau_{\rho-1})\sum_{t = \rho + 1}^{\mathbb{T}}x_j(t) \\
      &= 1 + \sum_{\rho=1}^{\mathbb{T} - 1}(\tau_{\rho} - \tau_{\rho-1})\left(\sum_{t = 1}^{\mathbb{T}}x_j(t) - \sum_{t = 1}^{\rho}x_j(t) \right)\\
      &= 1 + \sum_{\rho=1}^{\mathbb{T} - 1}(\tau_{\rho} - \tau_{\rho-1})(1 - X_j(\rho))
\end{align*}
\end{proof}

After getting a solution, we would schedule coflows into intervals instead of into time slots. Inside each time interval, we just schedule each flow at uniform speed, and break into actual time slots. Similar to Section~\ref{sec:algorithm}, we can prove that this solution is feasible.

\subsection{Analysis}
\label{sec:analysis-random_}

Recall that $C_j^*$ denotes the completion time of the coflow $F_j$ in the optimal LP solution. For any continuous time $t \in [0, T]$, define $\widehat{X}_j(t)$ to be the fraction of coflow $F_j$ that has been scheduled in the LP solution by time $t$. Note $X_j(t)$ is for time interval $[\tau_{t-1}, \tau_{t}]$, but $\widehat{X}_j(t)$ is for original time slots. Flows are scheduled at an uniform rate in every time interval. Use $\rho(t)$ to denote the smallest $\rho$ such that $t\in (\tau_{\rho - 1}, \tau_{\rho}]$, we have
\begin{align}
\widehat{X}_j(t) = X_j(\rho(t)) + \frac{t - \tau_{\rho(t)- 1}}{\tau_{\rho(t)} - \tau_{\rho(t) - 1}} \left( X_j(\rho(t) + 1) - X_j(\rho(t)) \right)
 \label{eq:tau_}
\end{align}

Similar to Lemma~\ref{lem:integral}, we state and prove the following lemma.
\begin{lemma}
\label{lem:integral_}
$\int_{t = 0}^{T} (1 - \widehat{X}_j(t)) \mathbf{d} t \leq (1 + \epsilon) C_j^* - \frac{1}{2}$ where $X_j(\rho)$ is defined as per Eq. \eqref{eq:tau_}.
\end{lemma}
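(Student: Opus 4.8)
The plan is to mimic the proof of Lemma~\ref{lem:integral}, but with the unit time slots replaced by the geometric intervals $\ell_\rho=[\tau_{\rho-1},\tau_\rho]$, and to carefully account for the extra factor that the non-uniform interval lengths contribute. Write $d_\rho := \tau_\rho-\tau_{\rho-1}$ for the length of $\ell_\rho$.

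First I would split $\int_{0}^{T}\widehat{X}_j(t)\,dt=\sum_{\rho}\int_{\tau_{\rho-1}}^{\tau_\rho}\widehat{X}_j(t)\,dt$. On each interval $\ell_\rho$, Eq.~\eqref{eq:tau_} makes $\widehat{X}_j$ the linear interpolant of its two endpoint values, so $\int_{\ell_\rho}\widehat{X}_j = \tfrac{d_\rho}{2}\bigl(X_j(\rho-1)+X_j(\rho)\bigr)$. Summing and using $T=\sum_\rho d_\rho$ gives $\int_0^T(1-\widehat{X}_j(t))\,dt=\tfrac12\sum_\rho d_\rho\bigl[(1-X_j(\rho-1))+(1-X_j(\rho))\bigr]$. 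Using the boundary values $X_j(0)=0$ and $X_j(\mathbb{T})=1$ and reindexing the first of the two sums, this telescopes to $\tfrac12\bigl[d_1+\sum_{\rho=1}^{\mathbb{T}-1}(d_\rho+d_{\rho+1})(1-X_j(\rho))\bigr]$, the analogue of the intermediate identity in Lemma~\ref{lem:integral}.

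The one new ingredient is the geometric growth estimate $d_{\rho+1}\le(1+\epsilon)d_\rho$: for $\rho\ge 2$ the ratio is exactly $1+\epsilon$, and for $\rho=1$ it holds since $d_1=1$ and $d_2=\epsilon$. Hence $d_\rho+d_{\rho+1}\le(2+\epsilon)d_\rho$ and $d_1=1$, so $\int_0^T(1-\widehat{X}_j(t))\,dt\le \tfrac12+\tfrac{2+\epsilon}{2}\sum_{\rho=1}^{\mathbb{T}-1}d_\rho(1-X_j(\rho))$. By the rewritten LP constraint (Proposition~\ref{prop:lpfeasibility_}, i.e.\ Eq.~\eqref{lp2:completion}), $\sum_{\rho=1}^{\mathbb{T}-1}d_\rho(1-X_j(\rho))\le C_j^*-1$, so the right-hand side is at most $\tfrac12+\tfrac{2+\epsilon}{2}(C_j^*-1)=C_j^*-\tfrac12+\tfrac{\epsilon}{2}(C_j^*-1)$. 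Finally, since $C_j^*\ge 1$ (again from Eq.~\eqref{lp2:completion}), the slack term satisfies $\tfrac{\epsilon}{2}(C_j^*-1)\le\epsilon C_j^*$ — in fact $\tfrac{\epsilon}{2}C_j^*$ already suffices — which yields $\int_0^T(1-\widehat{X}_j(t))\,dt\le(1+\epsilon)C_j^*-\tfrac12$.

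I do not expect a genuine obstacle here; the delicate points are purely bookkeeping: (i) the first interval has length $1$ rather than obeying the geometric ratio, so it must be handled separately; (ii) one must pin down the horizon so that the job is fully scheduled by $T$, making $X_j(\mathbb{T})=1$ and the tail contribute nothing; and (iii) one must fix the indexing convention relating $X_j(\rho)$ to the value of $\widehat{X}_j$ at the endpoints of $\ell_\rho$ so that it is consistent with Proposition~\ref{prop:lpfeasibility_} — with the uniform-rate interpretation everything lines up, and the remainder is the same slack-counting as in Lemma~\ref{lem:integral}.
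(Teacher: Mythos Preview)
Your proposal is correct and follows essentially the same approach as the paper: integrate piecewise using the trapezoidal rule, reindex one of the two sums, exploit the geometric growth $d_{\rho+1}\le(1+\epsilon)d_\rho$ to merge them into $\tfrac{2+\epsilon}{2}\sum_\rho d_\rho(1-X_j(\rho))+\tfrac12$, and then invoke Proposition~\ref{prop:lpfeasibility_}. If anything, your statement of the growth bound as an inequality (with the $\rho=1$ case checked separately) is cleaner than the paper's presentation, which writes it as an equality that is only literally true for $\rho\ge 3$.
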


\begin{proof}
From constraints \eqref{lp2:completion}, we have that
\begin{align*}
  &\int_{t = 0}^{T} (1 - \widehat{X}_j(t)) \mathbf{d} t\\
  =&\sum_{\rho = 1}^{\mathbb{T}}\int_{t = \tau_{\rho - 1}}^{\tau_{\rho}} (1 - \widehat{X}_j(t)) \mathbf{d} t\\
  \intertext{Since $1 - \widehat{X}_j(t)$ is linear for $t\in (\tau_{\rho - 1}, \tau_{\rho}]$,}
  =&\sum_{\rho = 1}^{\mathbb{T}}\frac{\tau_{\rho} - \tau_{\rho - 1}}{2}(1 - X_j(\rho) + 1 - X_j(\rho - 1))\\
  =&\sum_{\rho = 1}^{\mathbb{T}}\frac{\tau_{\rho} - \tau_{\rho - 1}}{2}(1 - X_j(\rho)) + \sum_{\rho = 1}^{\mathbb{T}}\frac{\tau_{\rho} - \tau_{\rho - 1}}{2}(1 - X_j(\rho - 1))\\
  =&\sum_{\rho = 1}^{\mathbb{T}}\frac{\tau_{\rho} - \tau_{\rho - 1}}{2}(1 - X_j(\rho))\\
  &+ (1 + \epsilon)\sum_{\rho = 2}^{\mathbb{T}}\frac{\tau_{\rho - 1} - \tau_{\rho - 2}}{2}(1 - X_j(\rho - 1)) + \frac{\tau_1 - \tau_0}{2} (1 - X_j(0))\\
  =&\sum_{\rho = 1}^{\mathbb{T} - 1}\frac{\tau_{\rho} - \tau_{\rho - 1}}{2}(1 - X_j(\rho)) + \frac{\tau_{\mathbb{T}} - \tau_{\mathbb{T} - 1}}{2}(1 - X_j(\mathbb{T}))\\
  &+ (1 + \epsilon)\sum_{\rho = 1}^{\mathbb{T} - 1}\frac{\tau_{\rho} - \tau_{\rho - 1}}{2}(1 - X_j(\rho)) + \frac{\tau_1 - \tau_0}{2} (1 - X_j(0))\\
  =&\frac{2 + \epsilon}{2}\sum_{\rho = 1}^{\mathbb{T} - 1}(\tau_{\rho} - \tau_{\rho - 1})(1 - X_j(\rho)) + \frac{1}{2}\\
  \intertext{Plugging in Proposition~\ref{prop:lpfeasibility_},}
  \leq & \frac{2 + \epsilon}{2}(C_j^* - 1) + \frac{1}{2} \\
   \leq  & (1 + \epsilon) C_j^* -  \frac{1}{2} \\
\end{align*}
\end{proof}

For any $\lambda \in [0,1]$, define $C_j^*(\lambda)$ to be the earliest time $\rho$ such that $\lambda$ fraction of the coflow $F_j$ has been scheduled in the LP solution, i.e., in other words its the smallest $t$ such that $\widehat{X}_j(t) = \lambda$. Note that by time $C_j^*(\lambda)$, $\lambda$ fraction of \emph{every flow $f^i_j \in F_j$} has been scheduled by the LP.

\begin{proposition}
\label{prop:counting_}
$ \displaystyle \int_{\lambda = 0}^1 C_j(\lambda) \mathbf{d}\lambda \leq \int_{t = 0}^{T}(1 - \widehat{X}_j(t)) \mathbf{d} t$.
\end{proposition}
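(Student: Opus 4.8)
The plan is to reuse, essentially verbatim, the layer-cake plus Fubini argument from the proof of Proposition~\ref{prop:counting}, now applied to the interpolated cumulative function $\widehat{X}_j$ of the geometric-interval schedule in place of $X_j$. The only things that need checking are that $\widehat{X}_j$ retains the two structural properties that made that argument work: it is continuous and nondecreasing on $[0,T]$, and $\widehat{X}_j(T)=1$ (equivalently, all of $F_j$, hence every flow $f_j^i \in F_j$, is LP-scheduled by time $T$, which is exactly what it means for $T$ to bound the length of the schedule; this guarantees $0 \le C_j(\lambda) \le T$ for every $\lambda \in [0,1]$). Continuity and monotonicity are immediate from Eq.~\eqref{eq:tau_}, since on each interval $\widehat{X}_j$ linearly interpolates between consecutive values $X_j(\cdot)$ and the $X_j(\cdot)$ are nondecreasing.

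Given these facts I would argue as follows. For each $\lambda$, write $C_j(\lambda) = \int_{t=0}^{T} \mathbbm{1}_{[C_j(\lambda) > t]}\,\mathbf{d}t$, which is legitimate because $0 \le C_j(\lambda) \le T$. Integrating over $\lambda \in [0,1]$ and exchanging the order of integration (Tonelli) gives
\[
\int_{\lambda=0}^{1} C_j(\lambda)\,\mathbf{d}\lambda = \int_{t=0}^{T} \Bigl( \int_{\lambda=0}^{1} \mathbbm{1}_{[C_j(\lambda) > t]}\,\mathbf{d}\lambda \Bigr)\,\mathbf{d}t .
\]
Since $C_j(\lambda)$ is the smallest $t'$ with $\widehat{X}_j(t') = \lambda$, monotonicity of $\widehat{X}_j$ gives $C_j(\lambda) > t \iff \widehat{X}_j(t) < \lambda$, so $\{\lambda \in [0,1] : C_j(\lambda) > t\} = (\widehat{X}_j(t), 1]$, which has measure $1 - \widehat{X}_j(t)$. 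Substituting yields $\int_{\lambda=0}^1 C_j(\lambda)\,\mathbf{d}\lambda = \int_{t=0}^T (1 - \widehat{X}_j(t))\,\mathbf{d}t$; in particular the claimed inequality holds (in fact with equality). The $\le$ form is exactly what the remaining steps — the analog of the final lemma of Section~\ref{sec:analysis-random}, where one bounds $\mathbb{E}[C_j(alg)] \le \int_{\lambda=0}^1 f(\lambda)\lceil C_j(\lambda)/\lambda \rceil\,\mathbf{d}\lambda \le 2\int_{\lambda=0}^1 C_j(\lambda)\,\mathbf{d}\lambda + 1$ — need, together with Lemma~\ref{lem:integral_}, to conclude $\mathbb{E}[C_j(alg)] \le 2(1+\epsilon)C_j^*$ and hence Theorem~\ref{thm:main_super_polynomial}.

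The main obstacle is mild and purely bookkeeping: one must confirm that the one-sided values of $\widehat{X}_j$ at the interval endpoints $\tau_\rho$ agree, so that the function is genuinely continuous under the indexing of Eq.~\eqref{eq:tau_}, and that the set identity $\{\lambda : C_j(\lambda) > t\} = (\widehat{X}_j(t),1]$ continues to hold for a.e.\ $t$ in the degenerate cases where $\widehat{X}_j$ is constant on a sub-interval (idle time) or where $\lambda$ equals one of the breakpoint values $X_j(\rho)$. None of these exceptional cases has positive measure, so the integral identity is unaffected, and the geometric spacing $\tau_\rho = (1+\epsilon)^{\rho-1}$ plays no role whatsoever in this particular proposition — it only enters (through the factor $1+\epsilon$) in Lemma~\ref{lem:integral_}.
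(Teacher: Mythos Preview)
Your proposal is correct and follows essentially the same layer-cake plus Fubini argument as the paper's own proof, which writes $C_j(\lambda)=\int_0^T \mathbbm{1}_{[C_j(\lambda)>t]}\,\mathbf{d}t$, swaps the integrals, and identifies the inner integral as $1-\widehat{X}_j(t)$. If anything, you are more careful than the paper about the monotonicity/continuity hypotheses and the measure-zero edge cases; the paper states the chain of equalities without that commentary.
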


\begin{proof}
\begin{align*}
  \int_{\lambda = 0}^1 C_j(\lambda) \mathbf{d}\lambda
  &= \int_{\lambda = 0}^1 \int_{t = 0}^{T} \mathbbm{1}_{[ C_j(\lambda) > t]} \mathbf{d} t \mathbf{d}\lambda\\
  &= \int_{t = 0}^{T} \int_{\lambda = 0}^1 \mathbbm{1}_{[C_j(\lambda) > t]} \mathbf{d}\lambda \mathbf{d} t\\
  &= \int_{t = 0}^{T} \int_{\lambda = X_j(t)}^1 1 \mathbf{d}\lambda \mathbf{d} t\\
  &= \int_{t = 0}^{T}(1 - \widehat{X}_j(t)) \mathbf{d} t
\end{align*}
\end{proof}

Finally, we are ready to bound the completion time $C_j(alg)$ of coflow $F_j$ in the stretched schedule. For any fixed $\lambda \in (0,1)$, since we stretch the schedule by a factor of $\frac{1}{\lambda}$, it is easy to verify\footnote{All flows $f_j^i \in F_j$ were completed by at least $\lambda$ fraction by time $C_j(\lambda)$. So in the stretched schedule, all those flows must be completed by time $\left\lceil \frac{C_j(\lambda)}{\lambda} \right\rceil$.} that $C_j(alg) \leq \left\lceil \frac{C_j(\lambda)}{\lambda} \right\rceil$.
Since $\lambda$ is drawn randomly from a distribution, the following lemma bounds the expected completion time of coflow $F_j$ in the stretched schedule.

\begin{lemma}
The expected completion time of any coflow $F_j$ in the stretched schedule is bounded by $2(1 + \epsilon)C_j^*$.
\end{lemma}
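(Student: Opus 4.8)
The plan is to reproduce the expected-cost computation from the polynomial-time case (the proof of the bound $\mathbb{E}[C_j(alg)] \le 2C_j^*$ in Section~\ref{sec:analysis-random}), this time feeding it the geometric-interval analogues Lemma~\ref{lem:integral_} and Proposition~\ref{prop:counting_} in place of Lemma~\ref{lem:integral} and Proposition~\ref{prop:counting}. First I would invoke feasibility of the stretched schedule (argued exactly as in Section~\ref{sec:algorithm}, now stretching whole geometric intervals and then re-breaking each stretched interval into unit slots at uniform rate) together with the per-sample bound $C_j(alg) \le \lceil C_j(\lambda)/\lambda \rceil$ recorded in the footnote above, and take expectations over $\lambda$ drawn from the density $f(\lambda)=2\lambda$:
\[
\mathbb{E}[C_j(alg)] \le \int_{\lambda=0}^1 f(\lambda)\left\lceil \frac{C_j(\lambda)}{\lambda}\right\rceil d\lambda .
\]

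Next I would drop the ceiling using $\lceil x\rceil \le x+1$ and substitute $f(\lambda)=2\lambda$, so that the numerator $\lambda$ cancels the $\lambda$ coming from the $\tfrac{1}{\lambda}$ stretch factor:
\[
\mathbb{E}[C_j(alg)] \le \int_{\lambda=0}^1 2\lambda\left(\frac{C_j(\lambda)}{\lambda}+1\right) d\lambda = 2\int_{\lambda=0}^1 C_j(\lambda)\, d\lambda + \int_{\lambda=0}^1 2\lambda\, d\lambda = 2\int_{\lambda=0}^1 C_j(\lambda)\, d\lambda + 1 .
\]
This is precisely why the density is taken to be $2\lambda$: it is the affine density on $[0,1]$ whose linear factor cancels the stretch denominator while still integrating to $1$. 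I would then apply Proposition~\ref{prop:counting_} to replace $\int_0^1 C_j(\lambda)\, d\lambda$ by $\int_0^T (1-\widehat{X}_j(t))\, dt$, and Lemma~\ref{lem:integral_} to bound the latter by $(1+\epsilon)C_j^* - \tfrac12$, giving
\[
\mathbb{E}[C_j(alg)] \le 2\left((1+\epsilon)C_j^* - \frac{1}{2}\right) + 1 = 2(1+\epsilon)C_j^* ,
\]
which is the claimed lemma. Theorem~\ref{thm:main_super_polynomial} then follows by linearity of expectation: summing $w_j\,\mathbb{E}[C_j(alg)] \le 2(1+\epsilon)\,w_j C_j^*$ over $j$ and using $\sum_j w_j C_j^* \le \sum_j w_j C_j(opt)$ yields a $(2+2\epsilon)$-approximation, so running the whole construction with parameter $\epsilon/2$ gives the stated $(2+\epsilon)$ bound.

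I do not expect any genuine obstacle in this lemma itself. All of the new difficulty of the super-polynomial regime has already been absorbed into Lemma~\ref{lem:integral_} and into the feasibility argument for stretched intervals; once those are in hand the computation above is identical in form to the polynomial case, with the only change being the extra $(1+\epsilon)$ carried along from Lemma~\ref{lem:integral_}. The one delicate point in the whole super-polynomial argument is that $(1+\epsilon)$ loss, which arises when passing from the weighted sum $\sum_\rho(\tau_\rho-\tau_{\rho-1})(1-X_j(\rho))$ to the integral $\int (1-\widehat{X}_j(t))\, dt$: a term indexed by interval $\rho-1$ enters the integral with interval length $\tau_{\rho-1}-\tau_{\rho-2} = (\tau_\rho-\tau_{\rho-1})/(1+\epsilon)$, and every step downstream simply propagates this factor unchanged.
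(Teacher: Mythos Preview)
Your proposal is correct and mirrors the paper's own proof essentially line for line: take the expectation over $\lambda\sim f(\lambda)=2\lambda$, drop the ceiling via $\lceil x\rceil\le x+1$, cancel the $\lambda$'s, then plug in Proposition~\ref{prop:counting_} and Lemma~\ref{lem:integral_} to conclude $\mathbb{E}[C_j(alg)]\le 2(1+\epsilon)C_j^*$. Your additional remarks on why the density $2\lambda$ is chosen and how the $(1+\epsilon)$ loss originates in Lemma~\ref{lem:integral_} are accurate commentary but go slightly beyond what the paper spells out.
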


\begin{proof}
\begin{align*}
\mathbb{E}[C_j(alg)] &\leq \int_{\lambda=0}^1 f(\lambda) \left\lceil  \frac{C_j(\lambda)}{\lambda} \right\rceil d\lambda\\
&\leq\int_{\lambda=0}^1 2 \lambda \left( 1 +\frac{C_j(\lambda)}{\lambda}  \right) d\lambda\\
&= \int_{\lambda=0}^1 2 \lambda d\lambda + 2\int_{\lambda=0}^1 C_j(\lambda) d\lambda\\
&= 1 + 2 \int_{\lambda=0}^1 C_j(\lambda) d\lambda
\intertext{By Lemma \ref{lem:integral_} and Proposition \ref{prop:counting_},}
&\leq 2 (1 + \epsilon)C_j^*.
\end{align*}
\end{proof}

Theorem \ref{thm:main_super_polynomial} thus follows from the linearity of expectation.

}\fi
\end{document}